\definecolor{darkblue}{rgb}{0,0.1,0.5}
\DeclareMathAlphabet{\mathcal}{OMS}{cmsy}{m}{n}    
\DeclareMathAlphabet{\mathpzc}{OT1}{pzc}{m}{it}   
\newtheorem{assumption}{Assumption}
\def \x {{\bm{x}}}
\def\be{\begin{equation}}
\def\ee{\end{equation}}
\numberwithin{equation}{section} 
\begin{document}

\title{Privacy-Preserving Black-Box Optimization (PBBO): Theory and the Model-Based Algorithm DFOp}

\titlerunning{PBBO \& DFOp}        

\author{Pengcheng Xie}


\institute{Applied Mathematics and Computational Research Division, 
Lawrence Berkeley National Laboratory, Berkeley, CA 94720, USA.\\
This work was partially supported by Laboratory Directed Research and
Development (LDRD) funding from Lawrence Berkeley National Laboratory and by the U.S.
Department of Energy, Office of Science, Office of Advanced Scientific Computing Research applied mathematics program (SEAZOTIE) under Contract Number DE-AC02-05CH11231. 
}

\date{ } 

\maketitle

\begin{abstract}
\begin{sloppypar}
    
This paper focuses on solving unconstrained privacy-preserving black-box optimization (PBBO),  its corresponding least Frobenius norm updating of quadratic models, and the differentially privacy mechanisms for PBBO.  Optimization problems with transformed/encrypted objective functions aim to minimize $F(\x)$, which is encrypted/transformed/encrypted to $F_k(\x)$ as the output at the $k$-th iteration. A new derivative-free solver named DFOp, with its implementation, is proposed in this paper, which has a new updating formula for the quadratic model functions. The convergence of DFOp for solving problems with transformed/encrypted objective functions is given. Other analyses, including the new model updating formula and the analysis of the transformation's impact to model functions are presented. We propose two differentially private noise-adding mechanisms for privacy-preserving black-box optimization. Numerical results show that DFOp performs better than compared algorithms. To the best of our knowledge, DFOp is the first derivative-free solver that can solve black-box optimization problems with step-encryption and privacy-preserving black-box problems exactly, which also tries to answer the open question about the combination of derivative-free optimization and privacy.  
\end{sloppypar}

\end{abstract}

\section{Motivation and Introduction}\label{Introduction}
 
 In many practical optimization scenarios, the objective function is not available in an explicit analytic form and may only be accessed through costly evaluations, such as those arising from complex simulations or physical experiments. In such settings, derivative information is often unavailable or unreliable, making classical gradient-based optimization techniques inapplicable. Problems of this type are commonly referred to as {derivative-free optimization} problems. 
Derivative-free optimization methods are specifically designed to operate without requiring gradient or higher-order derivative information, relying instead on function evaluations alone. These methods have been extensively studied and developed over the past decades; comprehensive overviews can be found in the monographs by Conn, Scheinberg, and Vicente~\cite{conn2009introduction}, as well as by Audet and Hare~\cite{audet2017derivative}. In this work, we focus on a class of derivative-free optimization problems in which the objective function undergoes transformation or encryption, posing additional challenges for efficient optimization.

\subsection{Transformed/encrypted objective functions and black-box optimization problems}

\begin{sloppypar}
In this paper, we are interested in the black-box optimization problems with transformed/encrypted objective functions (see more details of privacy-preserving optimization in \cite{lowy2023optimal,lowy2023private,lowy2023private2}). We will firstly present details of problems with transformed/encrypted objective functions, and then illustrate private problems before we come up with the new solver DFOp in Section \ref{DFOp}.
\end{sloppypar}

{
We consider unconstrained black-box optimization problems with transformed or encrypted objective functions of the form
\begin{equation}\label{fx}
\min_{\boldsymbol{x} \in \mathbb{R}^n} F(\boldsymbol{x}),
\end{equation}
where the objective function can only be accessed through function evaluations. At each iteration, the black box provides function values at a finite number of query points, and all queries within the same iteration share an identical transformation of the original objective function. Specifically, at iteration $k$, the original objective function $F$ is transformed into a modified function $F_k$ \cite{xie2024derivative}, and the optimization algorithm only observes evaluations of $F_k$.

\begin{definition}
Let $T_k:\mathbb{R}\rightarrow\mathbb{R}$ denote a transformation applied at iteration $k$. For a given function $F$ and any $\boldsymbol{x}\in\mathbb{R}^n$, we define
\[
F_k(\boldsymbol{x}) = T_k(F(\boldsymbol{x})).
\]
We refer to $F_k$ as the {transformed} or {encrypted} objective function corresponding to $F$ at iteration $k$.
\end{definition}

The transformation $T_k$ may vary from one iteration to another and is not assumed to satisfy any particular structural property. Such transformed or encrypted objective functions naturally arise in a variety of settings, including private or secure black-box optimization, where different transformations can be interpreted as different privacy-preserving mechanisms applied to the objective function. In particular, transformations induced by additive or multiplicative noise are closely related to concepts in differential privacy. 
More broadly, optimization problems involving transformed objective functions appear in several practical contexts. For instance, optimization problems with iteration-dependent regularization terms can be viewed as special cases of black-box optimization with transformed objectives. A representative example is the derivative-free trust-region framework for composite nonsmooth optimization proposed by Grapiglia, Yuan, and Yuan~\cite{Grapiglia2016}, which fits naturally into this setting.}

\subsection{Private black-box optimization and differential privacy}
\label{Private-BBO}

\begin{sloppypar} 
For some optimization problems in practice, personal information is supposed to be protected, while private information may leak out by adding a query. Thus the function values require the encryption. Besides, Liu et al. \cite{9186148} proposed the following open questions. 
\begin{itemize}
\item What roles does black-box/zeroth-order\footnote{Zeroth-order optimization is a subset of derivative-free optimization, and zeroth-order optimization methods solve black-box problems similarly to gradient-based methods.} optimization play when solving privacy-preserving problems?
\item How can we design derivative-free optimization algorithms with privacy guarantees?
\end{itemize}
\end{sloppypar}

We propose the definition of private black-box optimization based on  the characteristics of private problems and data privacy protection in practice. The general form of the unconstrained composite private black-box optimization problem\footnote{Constrained private black-box optimization problems will be considered in the future. Our results can extend in a straightforward way to constrained problems.} can be formulated as 
\begin{equation}\label{1.1}
\min_{x \in \mathbb{R}^n} \ F({x})=f(x)+h(x),
\end{equation}
where $f$ is a black-box function, and $h$ is a private black-box function. The evaluation of $f$ is expensive, but its function value is public and exact. At the $k$-th iteration, $h$ is encrypted to $h_k$ by adding noises, and the true function value of $h$ at each corresponding point is private. In view of the characteristics above, $f$ is called the public black-box function. The term public indicates the fact that the true function value is known by the public.  There may be some important information included in $h$, such as trade secrets, customer personal information and so on. When solving this kind of optimization problems, $f$ and $h$ should be called as few times as possible, for the expensive evaluation cost on $f$ in time, computation and other aspects, as well as the requirement of the encryption and the evaluation cost on $h$.

The encryption in $h$ may disturb the gradient significantly. Therefore, algorithms using gradient values may have numerical instability or even non-convergence. Fortunately, derivative-free methods do not use gradient values, which means that the effect of gradient error is avoid. In this sense, derivative-free methods have advantages over gradient-based ones for solving private optimization problems. The limitation of the function evaluations for protecting the true value of the function $h$ is similar to the need of an efficient derivative-free algorithm for solving expensive black-box optimization problems. Therefore, it is natural to modify derivative-free methods to solve private black-box optimization problems.

We note that grey-box optimization \cite{greybox} and private black-box optimization \cite{xieyuannew1,10.1093/imanum/drae106} are also related to each other. Grey-box optimization usually contains black-box constraints and the constraints with known forms or some other information of the objective function, which need to be encrypted in some practical cases. 
 
We next describe two representative examples that illustrate the structure and characteristics of private black-box optimization problems.

\begin{example}
The first one is the cloud-based distributed optimization problem, which aims to minimize the local and cloud-based composite objective functions, and to protect the privacy of the corresponding objective functions. The function $f$ denotes the exact and local information saved in the user’s personal computer. 
The function $h$ denotes the evaluation information obtained from the cloud, and others' privacy has been encrypted and saved in this part. One can see the work of Wang et al. \cite{inproceedingsDP} for more details.
\end{example}
\begin{example}
We can also find private black-box optimization problems in personal health fields, where $f$ is usually a public health evaluation, and $h$ denotes private information.  The individual health evaluation is contained in the function $h$. One can see medical information literature for more details, such as the work of Liu, Huang and Liu \cite{liu2015secure}.  
\end{example}
To protect the privacy from being attacked by adding a query, researchers in data science and cryptography have  proposed the concept of differential privacy and differentially private mechanisms\footnote{Details are in Section \ref{Solving private black-box optimization problems with DFOp}.}. As a kind of noise-adding encryption method, differentially private mechanisms can protect the queried data and reduce the risk of the record being recognized as much as possible. Therefore, they have been widely adopted in network security, and keep gaining momentum in implementation owing to the relatively less complex computation.

Differentially private mechanisms have been applied in different areas \cite{privacy13,privacy18,privacy7,privacy19,privacy5,privacy37,privacy35,privacy8,Kasiviswanathan2008}. The United States Census Bureau applies differentially private mechanisms to present the commuting mode \cite{census}. Companies, such as Google \cite{nnn-14}, Apple \cite{apple,nnn-1}, Microsoft \cite{nnn-9} and Alibaba \cite{nnn-32}, also apply such mechanisms as important tools to protect the users' privacy. Moreover, some work about the differentially private Bayesian optimization has been discussed by Kusner et al. \cite{diffpri}. 
 
\section{New Least Frobenius Norm Updating of Quadratic Models and DFOp}
\label{DFOp}
\begin{sloppypar} 
We start this section with model-based derivative-free optimization method (recent work can be seen from \cite{roberts2025introductioninterpolationbasedoptimization,xieyuannew1,XIE2025116146,Xie16122025,xie2025remuregionalminimalupdating,CartisFialaBenRob}) and the popular derivative-free optimization software NEWUOA \cite{powell2006newuoa}. NEWUOA is iterative, in which a quadratic model function $Q \approx F$ is required for adjusting the variables. In NEWUOA's framework and analysis, $Q_k$ denotes the $k$-th model function, and $X_k$ denotes the interpolation set at the $k$-th iteration. The number of interpolation points, denoted by $m$, satisfies that $n+2 \le m \le \frac{1}{2}(n+1)(n+2)$, which is flexible in the interpolation. Notice that the coefficients of the quadratic model function $Q$ are the symmetric Hessian matrix $\nabla^2Q$, the gradient vector $\nabla Q$ and the constant term, whose freedom is $\frac{1}{2}(n+1)(n+2)$ together, which is $\mathcal{O}(n^2)$. When the problem's dimension, we say $n$, is large, if we directly determine the coefficients of the model function $Q$ by solving interpolation equations $Q(x_i)=F(x_i), \ i=1,\ldots,\frac{1}{2}(n+1)(n+2)$, the number of function evaluations is too large. To reduce the numbers of calling functions, Powell \cite{powell2006newuoa} advised that we can use fewer interpolation points to obtain the quadratic model function. In order to uniquely determine the coefficients of $Q$, the iterative quadratic model $Q_k$ is designed to be the solution of
\end{sloppypar}
\begin{equation}\label{leastfrob}
\begin{aligned}
\underset{Q}{\operatorname{\min}}\ \left\Vert\nabla^{2} Q-\nabla^{2} Q_{k-1}\right\Vert_{F}^{2}, \
\text{subject to} \ Q(x_i)=F(x_i), \ x_i \in X_{k},
\end{aligned} 
\end{equation}
where $\Vert\cdot\Vert_F$ denotes the Frobenius norm, i.e., for given ${G}\in\mathbb{R}^{m\times n}$, 
\begin{equation*}
\Vert {G}\Vert_F^2=\sum_{i=1}^{m}\sum_{j=1}^{n}\vert {G}_{ij}\vert^2. 
\end{equation*}
 In other words, the quadratic model function $Q_k$ satisfies that $\nabla^{2} Q_k-\nabla^{2} Q_{k-1}$ has the least Frobenius norm over the quadratic functions satisfying the interpolation conditions $Q(x_i)=F(x_i), \ i=1, \ldots, m$. According to the convexity of the Frobenius norm, it is proved that the quadratic model function $Q_k$ is unique at the $k$-th iteration. Thus the rest of coefficients' freedom is taken up. The shared pseudocode of NEWUOA and DFOp is listed in Algorithm \ref{NEWUOA}\footnote{Our solver DFOp, as an improved version, shares the algorithmic framework with NEWUOA.}, where CRVMIN denotes the minimum curvature of the model function. Besides, TRSAPP, BIGLAG and BIGDEN are subroutines in NEWUOA. One can see Powell's paper \cite{powell2006newuoa} for more details of NEWUOA. 
\begin{remark}
Three subroutines TRSAPP, BIGLAG and BIGDEN get the names in the following senses, and the further discussion about them in DFOp is in subsection \ref{Discussion on the quadratic model subproblem}.
\begin{itemize}
\item TRSAPP denotes to find the trust-region subproblem's approximate solution.
\item BIGLAG denotes to obtain a big value of the Lagrange function.
\item BIGDEN denotes to obtain a big denominator of the updating formula.
\end{itemize}
\end{remark}
\begin{algorithm}[H]
\caption{Framework of DFOp\label{NEWUOA}} 
\begin{algorithmic}[1]
\STATE \textbf{Input}:  the black-box objective function $F$ and the initial point $x_0$
\STATE \textbf{Output}: the minimum point $ {x}_{\text{opt}}$ and the minimum $F_{\text{opt}}$
\STATE Initialize and get the interpolation set ${X}$, the initial quadratic model function $Q({x})$, parameters $\rho_{\text{beg}}$, $\Delta$ and $\rho_{\text{end}}$
\STATE Obtain the minimum   $F_{\text{opt}}$ and the corresponding point ${x}_{\text{opt}}$ in the interpolation set
\WHILE{$\rho\ge\rho_{\text{end}}$}
	\STATE solve the trust-region subproblem of the quadratic model 
	$\min_{d}  Q({x}_{\text{opt}}+ d)$,  subject to  $\Vert  d \Vert_{2} \leq \Delta$, 
	 and get $d$ by calling the subroutine TRSAPP,  and then set \text{CRVMIN}
        \STATE set the terminal criterion $=0$
        	\IF{$\Vert  d \Vert_{2} \geq \frac{1}{2} \rho$}
        	   \STATE compute $F( {x}_{\text{opt}}+d)$ and $\text{RATIO} =(F(x_{\text {opt}})-F( {x}_{\text{opt}}+d))/(Q({x}_{\text{opt}})-Q( {x}_{\text{opt}}+d))$; update $\Delta$, and get the index \text{MOVE}
		    \IF {$\text{MOVE}>0$} 
		    \STATE update the interpolation set ${X}$, the model function $Q({x})$ and ${x}_{\text{opt}}$
		   \ENDIF 
              \ELSE\IF{the quadratic model function $Q(x)$ reaches the limit} 
		\STATE       update $\rho$ and $\Delta$,  and set the terminal criterion $=1$
		\ELSE
               \STATE       reduce $\Delta$,  and set $\text{RATIO}=-1$
              \ENDIF 
           \ENDIF 
     \IF{$\text{RATIO}<0.1$ and the terminal criterion $=0$}
     \STATE     find $x_{\text{MOVE}}$, and define $\text{DIST}=\Vert  {x}_{\text{MOVE}}- {x}_{\text{opt}}\Vert_{2}$
        \IF{$\text{DIST} \geq 2 \Delta$} 
         \STATE       call the subroutine BIGLAG and BIGDEN to get $d$ to modify the interpolation model, and update the interpolation set ${X}$, the quadratic model function $Q({x})$ and ${x}_{\text{opt}}$
       \ELSE
         \IF{$\max [\Vert d\Vert_{2}, \Delta] \leq \rho$ and \text{RATIO} $\leq 0$} 
         \STATE update $\rho$ and $\Delta$
        \ENDIF 
       \ENDIF
     \ENDIF
\ENDWHILE 
\IF{$\Vert d\Vert_{2} \leq \frac{1}{2} \rho$}
\STATE calculate $F(x_{\text{opt}}+d)$, and get $x_{\text{opt}}$ and $F_{\text{opt}}$
 \ENDIF
\STATE\RETURN $x_{\text{opt}}$ and $F_{\text{opt}}$
	\end{algorithmic} 
\end{algorithm} 

\subsection{New model updating formula}
It is observed that the 
original updating formula of the least Frobenius norm updating quadratic model functions are not suitable for solving problems with transformed/encrypted objective functions, since the corresponding objective function changes with the iteration, which is exactly an important characteristic of the step-transformed\footnote{The term step-transformed/encrypted means that the objective functions' transformations/encryptions depend on the iteration/query step.} problems. Thus we come up with the idea to modify and improve NEWUOA in order to adapt to the black-box optimization with transformed/encrypted objective functions. The new updating formula is supposed to hold the ability to work when the transformed/encrypted objective function $F_k$ changes with the iteration number $k$.

Suppose that the interpolation set at the $k$-th iteration is $\{x_1,\ldots,x_m\}$. The quadratic model of the transformed/encrypted function $F_k$ satisfies that $Q_{\text{old}}(x_i)=F_k(x_i), \ i=1,\ldots,m$, where $m<\frac{1}{2}(n+1)(n+2)$. The index old denotes the $k$-th, and the index new denotes the $k+1$-th, which are both for simplicity and clearness.

What we want is to obtain the new model $Q_{\text{new}}$ satisfying that $Q_{\text{new}}(x_i)=F_{k+1}(x_i)$, $i=1,\ldots, t-1, {\text{new}}, t+1, \ldots, m$, and each $Q_k$ is the solution of 
\begin{equation}\label{leastfrob-2}
\begin{aligned}
\underset{Q}{\operatorname{\min}}\ \left\Vert \nabla^{2} Q-\nabla^{2} Q_{k-1}\right\Vert_{F}^{2}, \
\text{subject to}  \ Q(x_i)=F_k(x_i), \ x_i \in X_{k}.
\end{aligned}
\end{equation}
We omit the iteration index $k$ when there is no ambiguity in this subsection for simplicity. At the current step, we define the quadratic function $D(x)=Q_{\text{new}}(x)-Q_{\text{old}}(x)$. Then $D(x)$ should satisfy the following conditions. 

 If $ i\ne t$, then $D(x_i)=Q_{\text{new}}(x_{i})-Q_{\text{old}}(x_i)=F_{k+1}(x_i)-F_k(x_i)$.

 If $i=t=\text{new}$, denoting $t=\text{new}$ here since $x_t$ is exactly replaced by $x_{\text{new}}$, 
 then $D(x_{\text{new}})=Q_{\text{new}}(x_{\text{new}})-Q_{\text{old}}(x_{\text{new}})=F_{k+1}(x_{\text{new}})-Q_{\text{old}}(x_{\text{new}})$.

In order to obtain $D(x)$, we need to solve the optimization problem 
\begin{equation}
\label{miniFrob}
\begin{aligned}
\underset{D}{\operatorname{\min}} \  \left\Vert \nabla^2 D\right\Vert_F^2, \ 
\text{subject to} \ 
\left\{
\begin{aligned}
D(x_{i})=&F_{k+1}(x_i)-F_k(x_i), \ i=1,\ldots, t-1, t+1, \ldots, m,\\
D(x_{\text{new}})&=F_{k+1}(x_{\text{new}})-Q_{\text{old}}(x_{\text{new}}), \ i=t=\text{new}.
\end{aligned}
\right.
\end{aligned}
\end{equation}
Let $\lambda_j,\ j=1,2,\ldots,m$, be the Lagrange multipliers for the KKT conditions of matrix optimization problem (\ref{miniFrob}), which, as Powell \cite{Leastf} pointed out, have the properties that
\begin{equation}
\sum_{j=1}^{m} \lambda_{j}=0, \ \sum_{j=1}^{m} \lambda_{j}\left(x_{j}-x_{0}\right)=0,\ \nabla^{2} D=\sum_{j=1}^{m} \lambda_{j}\left(x_{j}-x_{0}\right)\left(x_{j}-x_{0}\right)^{\top},\label{3.2}
\end{equation}
where $x_0$ is the base point to reduce the computation errors, and it is initially chosen as the input start point. 
Therefore, the quadratic function $D(x)$ can be written in the form $D(x)=c+(x-x_{0})^{\top} g+\frac{1}{2} \sum_{j=1}^{m} \lambda_{j}((x-x_{0})^{\top}(x_{j}-x_{0}))^{2}$. After determining the parameters $\lambda=(\lambda_1,\ldots,\lambda_m)^{\top} \in \mathbb{R}^m$, $c \in \mathbb{R}$ and $g \in \mathbb{R}^{n}$, we can determine the unique function $D(x)$ and thus obtain the new quadratic model function $Q_{\text{new}}(x)$. Thus the system of linear equations
\begin{equation}
\label{BIG-(3.10)}
\left(\begin{array}{cc}
\mathbf{A} & \mathbf{X}^{\top} \\
\mathbf{X} & \mathbf{0}
\end{array}\right)
\left(
\lambda^{\top},
c,
g^{\top}
\right)^{\top}=\left(\begin{array}{c}
r^{\top}, 
0,\ldots,0
\end{array}\right)^{\top}
\end{equation}
holds, where the matrix $\mathbf{0}\in\mathbb{R}^{(n+1)\times(n+1)}$. The elements of the matrix $\mathbf{A}\in\mathbb{R}^{m\times m}$ and $\mathbf{X}\in\mathbb{R}^{(n+1)\times m}$ are
\begin{equation*}
\begin{aligned}
\mathbf{A}_{i j} =\frac{1}{2}\left(\left(x_{i}-x_{0}\right)^{\top}\left(x_{j}-x_{0}\right)\right)^{2}\ \text{and}\ 
\mathbf{X} =\left(\begin{array}{ccc}
1 & \ldots & 1 \\
x_{1}-x_{0} & \ldots  & x_{m}-x_{0}
\end{array}\right),
\end{aligned}
\end{equation*}
where $1\le i, j \le m$. Besides, the vector $r\in\mathbb{R}^{m}$ has the form as
\begin{equation}
\begin{aligned}
r=&\left(
F_{k+1}(x_1)-F_{k}(x_1),
\ldots,
F_{k+1}(x_{t-1})-F_{k}(x_{t-1}),
F_{k+1}(x_{\text{new}})-Q_{\text{old}}(x_{\text{new}}),\right.\\
&\left.F_{k+1}(x_{t+1})-F_{k}(x_{t+1}),
\ldots,
F_{k+1}(x_m)-F_{k}(x_m)
\right)^{\top}.
\end{aligned}\label{3.3}
\end{equation}
\begin{sloppypar}
Comparing to the vector $r$ in the model updating formula of NEWUOA, which is $(0,\ldots,0,F(x_{\text{new}})-Q_{\text{old}}(x_{\text{new}}),0,\ldots,0)^{\top}$, the vector $r$ in DFOp has a definitely different form. In NEWUOA, there is only one nonzero element in the vector $r$, which is the $t$-th component. However, the elements of the vector $r$ in the new model updating formula of DFOp are nonzero in general cases. The form of the vector $r$ in the new updating formula works well for solving problems with transformed/encrypted objective functions, while NEWUOA's original one can not.
\end{sloppypar} 

It can be observed that if $F_k=F$, for arbitrary  $k\in \mathbb{N}^+$, then the vector $r$ in DFOp will be as same as the vector $r$ in NEWUOA. 

In addition, let the matrix $\mathbf{W}\in\mathbb{R}^{(m+n+1)\times (m+n+1)}$ and the matrix $\mathbf{H}\in\mathbb{R}^{(m+n+1)\times (m+n+1)}$ separately be 
\begin{equation}\label{H}
\mathbf{W}=\left(\begin{array}{cc}
\mathbf{A} & \mathbf{X}^{\top} \\
\mathbf{X} & \mathbf{0}
\end{array}\right)\ 
\text{and}\ \mathbf{H}=\mathbf{W}^{-1}.
\end{equation} 
We can derive the elements of the matrix $\mathbf{W}$ directly from the interpolation points $x_{i}$, $i=1,\ldots, m$. The updating formula of the matrix $\mathbf{H}$ given by Powell \cite{Powell04onupdating} can still be used, which is
\begin{equation}
\begin{aligned} 
\mathbf{H}_\text{new}=& \mathbf{H}+\sigma^{-1}\left(\alpha\left(e_{t}-\mathbf{H} w\right)\left(e_{t}-\mathbf{H} w\right)^{\top}-\beta \mathbf{H} e_{t} e_{t}^{\top} \mathbf{H}\right.\\
+&\left. \tau\left(\mathbf{H} e_{t}\left(e_{t}-\mathbf{H} w\right)^{\top}+\left(e_{t}-\mathbf{H} w\right) e_{t}^{\top} \mathbf{H} \right)\right), 
\end{aligned}
\label{3.4}
\end{equation}
where 
\begin{equation}
\label{BIG-(4.12)}
\begin{aligned} 
\alpha= e_{t}^{\top}\mathbf{H}e_{t}, \  \beta=\frac{1}{2}\left\Vert x_{\text{new}}-x_{0}\right\Vert ^{4}_2-w^{\top}\mathbf{H}w, \ 
\tau=e_{t}^{\top}\mathbf{H}w,  \ \sigma=\alpha \beta+\tau^{2}, 
\end{aligned}
\end{equation}
and the vector $w \in \mathbb{R}^{m+n+1}$ is defined as
\begin{equation*}
\left\{\begin{array}{ll}w_{i}=\frac{1}{2}\left(\left({x}_{i}-{x}_{0}\right)^{\top}\left({x}_{\text{new}}-{x}_{0}\right)\right)^{2}, &\ i=1,2, \ldots, m, \\ 
w_{m+1}=1 \text { and } w_{i+m+1}=\left({x}_{\text{new}}-{x}_{0}\right)_{i}, &\ i=1,2, \ldots, n. \end{array}\right. 
\end{equation*}
Besides, $e_t\in \mathbb{R}^{m+n+1}$ denotes the vector of which the $t$-th component is $1$ and the other components are all $0$. 
Finally, the updating formula for $D(x)$ is
\begin{equation*}
D(x)=c+\left(x-x_{0}\right)^{\top} g+\frac{1}{2} \sum_{j=1}^{m} \lambda_{j}\left(\left(x-x_{0}\right)^{\top}\left(x_{j}-x_{0}\right)\right)^{2}, \ x \in \mathbb{R}^{n},
\end{equation*}
and the parameters $\lambda$, $c$ and $g$ are given by 
\begin{equation*}
(
\lambda^{\top},
c,
g^{\top}
)^{\top}
=\mathbf{H}(
r^{\top},0,\ldots,0
)^{\top},
\end{equation*}
where $r$ has the form as (\ref{3.3}), and the matrix $\mathbf{H}$ is updated to $\mathbf{H}_{\text{new}}$ as formula (\ref{3.4}).

\subsection{Simplification details and interpolation process}

By the virtue of $F_k(x_{\text{opt}})=Q_{\text{old}}(x_{\text{opt}})$, where $x_{\text{opt}}$ denotes the minimum point among the $k$-th interpolation points, the vector $r=r-(0,\ldots,0)^{\top}$ can be reformulated as
\begin{equation*}
\begin{aligned}
r=&\left(
F_{k+1}(x_1)-F_k(x_1),
\ldots,
F_{k+1}(x_{t-1})-F_k(x_{t-1}),
\delta_k,\right.\\
&\left.F_{k+1}(x_{t+1})-F_k(x_{t+1}),
\ldots,
F_{k+1}(x_m)-F_k(x_m)
\right)^{\top},
\end{aligned}
\end{equation*}
in which, the parameter $\delta_k$ is defined as
\begin{equation}\label{delta_k}
\delta_k= \left(F_{k+1}(x_{\text{new}})-F_k(x_{\text{opt}})\right)-\left(Q_{\text{old}}(x_{\text{new}})-Q_{\text{old}}(x_{\text{opt}})\right).
\end{equation}

The parameter $\delta_k$ can be regarded as the difference between the change of objective function values and the change of the old model function values from $x_{\text{opt}}$ to $x_{\text{new}}$. The new parameter $\delta_k$ is used in the implementation of DFOp, which aims to reduce the computation error and avoid needless computation of the constant term $c$.
 
 
In the implementation, we should take careful notice of the updating procedure. When solving an $n$-dimension problem with DFOp, we synchronize the updating of the first $m$ interpolation points. The iteration points are not used directly to obtain the model functions in the first $m-1$ iterations.
\begin{table}[htbp] 
\fontsize{8}{8}\selectfont
	\centering   
	\caption{Interpolation in DFOp for solving problems with transformed/encrypted objective functions\label{table1}}  
\fontsize{8}{8}\selectfont
	\setlength{\abovecaptionskip}{0.cm}
	\begin{tabular}{llllll}  
		\hline Step  & Set & \multicolumn{4}{c}{Function evaluation} \\   
		\hline $1$ & $\{x_1\}$  & $F_1(x_1)$&-&-&-\\    		
		$2$ & $\{x_1, x_2\}$  & $F_2(x_1)$ & $F_2(x_2)$&- &-\\
		\vdots & \vdots &  $\vdots$ &  $\vdots$&  $\vdots$& $\vdots$\\
		$m-1$  & $\{x_1, \ldots, x_{m-1}\}$  & $ F_{m-1}(x_{1})$ & $\ldots$ & $F_{m-1}(x_{m-1})$&-\\     
		$m$ & $\{x_1, \ldots, x_{m}\}$   & $ F_m(x_{1})$ & $\ldots$ &$F_m(x_{m-1})$& $F_m(x_{m})$\\
		$m+1$  & $\{x_2, \ldots, x_{m+1}\}$   & $F_{m+1}(x_{2})$ & $\ldots$&$F_{m+1}(x_{m})$ & $F_{m+1}(x_{m+1})$\\
		\vdots & \vdots &  $\vdots$ &  $\vdots$&  $\vdots$&  $\vdots$\\
		$k$ & $\{x_{k-m+1}, \ldots, x_{k}\}$ & $ F_k(x_{k-m+1})$ & $\ldots$ & $F_k(x_{k-1})$& $F_k(x_{k})$\\
\hline                                    
	\end{tabular}
\end{table}
Once the function evaluation at the new iteration point is finished, and the previous output function values are synchronized, the points in the interpolation set are updated at each step according to the process shown in Table \ref{table1}. In this way, it is ensured that the transformed/encrypted function values corresponding to each interpolation set are with the uniform transformation. 

For simplicity, we denote the dropped interpolation point with bad properties as $x_{k-m}$ at the $k$-th step in Table \ref{table1}. It may be a different point in practice. The purpose here is to show the interpolation updating process and to be consistent with the real updating process of adding a point and then dropping a point in DFOp. The cost of function evaluations will not increase in practice, which will be further discussed later in Remark \ref{callingtime} for applications to private black-box problems.  

\subsection{Discussions on the quadratic model subproblem}
\label{Discussion on the quadratic model subproblem}

One of the termination conditions of the subroutine for solving the quadratic model subproblem in the trust-region, which is called TRSAPP in (NEWUOA and) DFOp, is
$
\Vert d_k \Vert < \frac{1}{2} \rho_{k},
$
where $d_k$ is the solution of the trust-region subproblem
\begin{equation}\label{3.5}
\begin{aligned}
\underset{{d \in \mathbb{R}^n}}{\operatorname{\min}} \ Q_k(x_{\text{opt}}+d),\ 
\text{subject to}  \ \Vert d \Vert_{2} \le \Delta_{k}.
\end{aligned}
\end{equation}
The parameter $\rho_{k}$ is an iterative lower bound on the trust-region radius, and it is designed to maintain enough distance between the interpolation points. Besides, $x_{\text{opt}}$ denotes the minimum point at the $k$-th iteration.

A judgment will be made in the subroutine TRSAPP. If $\Vert d_k \Vert_{2} \ge \frac{1}{2} \rho_k$, then it calculates $F_k({x}_{\text{opt}}+{d}_k)$, and sets the parameter $\text{RATIO}=(F_k({x}_{\text{opt}})-F_k({x}_{\text{opt}}+{d_k}))/(Q_k({x}_{\text{opt}})-Q_k({x}_{\text{opt}}+{d_k}))$. Afterwards, the radius of the trust-region $\Delta_k$ is revised\footnote{Details of the revision are in Powell's paper \cite{powell2006newuoa}.} according to the parameter RATIO, subject to $\Delta_k \ge \rho_{k}$. Then the index $\text{MOVE}$\footnote{The index MOVE is designed to denote the interpolation point that will be dropped.} is set to $0$ or the index of the interpolation point that will be dropped at the next step. If $\Vert d_k \Vert_{2} < \frac{1}{2} \rho_{k}$, then it will test whether three recent values of $\Vert d_k \Vert_{2}$ and $\vert F_k({x}_{\text{opt}}+{d_k})-Q_k({x}_{\text{opt}}+{d_k})\vert$ are small enough compared to the parameter $\rm{CRVMIN}$, and $\rm{CRVMIN}$ here denotes the minimum of the curvature, i.e., $\min_{d \in \mathbb{R}^n} \frac{d^{\top} \nabla^2 Q_k d}{d^{\top} d}$. Therefore, the termination in TRSAPP has an important role, by which the number of function evaluations is skillfully deflected.

For the black-box optimization problem without transformations, the objective function does not change, and only the trust-region changes when the iteration increases. However, in the black-box optimization with transformed/encrypted objective functions we consider, $F_k$ and the trust-region both change when the iteration increases. The change of the function $F_k$ leads to the change of the quadratic model $Q_k$. Then the solution of the subproblem $d_k$ may change according to its definition. As a result, the condition $\Vert d_k \Vert_{2} < \frac{1}{2} \rho_{k}$ may not hold at all, and the termination will be influenced.

If the termination condition is not satisfied, the iteration can not leave the loop of TRSAPP. Firstly, the number of the iterations perhaps grows up to the infinity, which refers to a high cost for function evaluations, but still can not reach the wanted approximate minimum of $F$. Secondly, the model-improvement step given by BIGLAG and BIGDEN can hardly be called, which can not reduce the interpolation error. 

Notice that what our solver DFOp expects is a relatively large modulus of the denominator $\sigma=\alpha\beta+\tau^2$ according to (\ref{3.4}) and (\ref{BIG-(4.12)}). Usually $\sigma>\tau^2$ holds in practice, since in theory both $\alpha$ and $\beta$ should be positive, and we can use $\vert\alpha\vert\vert\beta\vert$ instead of $\alpha\beta$ in the implementation. Thus the subroutine BIGLAG obtains $d_k$ by solving the subproblem
\begin{equation*}
\max_d\ \left\vert\ell_{t}({x}_{\mathrm{opt}}+ {d})\right\vert, \ 
\text{subject to} \ \Vert  {d}\Vert_{2} \leq {\Delta}_{k},
\end{equation*}
since $\tau=\ell_{t}({x}_{\mathrm{opt}}+ {d})$, where $\ell_{t}$ is the $t$-th Lagrange function. In addition, the subroutine BIGDEN seeks a large value of the denominator's absolute value $\vert\sigma\vert$ in (\ref{3.4}), and thus it obtains $d_k$ by solving the subproblem
\begin{equation*}
\max_d\ \left\vert\sigma( {x}_{\mathrm{opt}}+ {d})\right\vert, \ 
\text{subject to} \ \Vert  {d}\Vert_{2} \leq {\Delta}_{k}.
\end{equation*}

The aim to reduce the singularity of the system that defines the Lagrange functions can still be achieved by calling BIGLAG and BIGDEN in DFOp. However, the iteration only implements the criticality step by calling TRSAPP instead of improving the quadratic model function, since BIGLAG and BIGDEN are rarely called in the iteration. The denominator $\sigma$ in (\ref{3.4}) might be close to 0, which will make the singularity of the system increase. Thus the error in computing the quadratic model functions may become larger, which makes DFOp invalid.

\subsection{The solution shift}

We should figure out in which condition the change of the transformed/encrypted function $F_k$ does not influence the convergence of DFOp. It can be observed that the key of the termination condition is the value of $\Vert d_k \Vert_{2}$. If it is not changed dissolvingly by the transformation, then the transformed/encrypted function $F_k$ at each iteration does not influence the convergence.
\begin{definition}
We define the model solution shift $\mathcal{D}_m$ from the function $F_{k-1}$ to the function $F_{k}$ as
$
\mathcal{D}_{m}(F_{k-1},F_k) =\Vert  d_{k}\Vert_{2},
$
where $d_{k}$ is the solution of subproblem (\ref{3.5}). 
\end{definition}
The model solution shift holds the name for two aspects. Firstly, $\Vert d_k \Vert_{2}$ here denotes the norm of the solution of a model subproblem in the trust-region, which also denotes the distance between two solutions of the neighboring subproblems\footnote{We discuss the most common case that the minimum points at most iterations are given by TRSAPP.}. Secondly, $\Vert d_k \Vert_{2}$ depends on the quadratic model function $Q_{k}$, which will change when the objective function $F_k$ changes. 

The solution $d_{k}$ of the model subproblem is an approximation to the solution of the subproblem
$
\min_{d} \ F_{k} (x_{\text{opt}}+d),\ 
\text{subject to}  \ \Vert d\Vert_{2} \leq \Delta_{k},
$
since $Q_{k}$ is the local interpolation quadratic model function of $F_{k}$, and it is proved to be a fully linear model function. Besides, the judgment based on the parameter $\text{RATIO}=(F_{k}(x_{\text{opt}})-F_{k}(x_{\text{opt}}+d_{k}))/(Q_{k}(x_{\text{opt}})-Q_{k}(x_{\text{opt}}+d_{k}))$ can ensure the decrease of $F_{k}$ at the point that  DFOp will accept, which can be regarded as a guarantee of the interpolation as well, since a good interpolation point's corresponding RATIO is close to 1. Besides, we propose the definition of the solution shift $\mathcal{D}_{s}$ in the following. Both the model solution shift $\mathcal{D}_{m}$ and  the solution shift $\mathcal{D}_{s}$ illustrate the influence caused by the transformation, and will help us analyze the convergence property of DFOp.
\begin{definition}
We define the solution shift $\mathcal{D}_s$ from the function $F_{k-1}$ to the function $F_{k}$ as
\begin{equation*}
\mathcal{D}_{s}(F_{k-1},F_k) =\left\{
\begin{array}{cl}
\underset{x^*_{i,k-1}\in \mathcal{LM}_{k-1},\ x^*_{i,k}\in \mathcal{LM}_{k}} {\sum}  \left\Vert  x^*_{i,k}-x^*_{i,k-1}\right\Vert_{2}, 
&\text{if $\vert\mathcal{LM}_{k-1}\vert=\vert\mathcal{LM}_{k}\vert$},\\
 +\infty,
 & \text{otherwise},
\end{array}
\right.
\end{equation*}
where $\mathcal{LM}_k$ is the ordered set of all local minimum points of the corresponding $k$-th objective function $F_k$, and $\vert\cdot\vert$ denotes the cardinality.
\end{definition}

When $F_k \ne F_{k-1}$, the solution shift may become larger than that of the situation where $F_k = F_{k-1}$. If $\mathcal{D}_{s}(F_{k-1},F_k)=\mathcal{D}_{s}(F, F)$, the transformation from $F$ to $F_k$ at the $k$-th iteration, for each $k\in\mathbb{N}^+$, has no dissolving influence on the convergence of DFOp. Moreover, if the model solution shift $\mathcal{D}_{m}(F_{k-1},F_k)=\mathcal{D}_{m}(F, F)$ holds, we can get stronger results.

\section{Transformations/encryptions and the Solution Shift}
\label{Linear transformation and the model function}

This section will illustrate the objective functions with special transformations/encryptions and the model functions given by DFOp. We will also discuss more other transformations/encryptions generally. The analysis is the basis for the convergence analysis.

\subsection{Positive monotonic transformations/encryptions and examples}

We firstly present the definition of positive monotonic transformations.
\begin{definition}\label{PMT}
If a transformation $T_k: \mathbb{R}\rightarrow \mathbb{R}$ preserves the order of data, namely $T_k(x)>T_k(y)$ for $x>y$, then we say that $T_k$ is a positive monotonic transformation.
\end{definition}

\begin{sloppypar} 
Then we can directly get the following theorem about the solution shift.
\begin{theorem}\label{monotone}
If the transformation on the objective function $F$ at the $k$-th step, for each $k\in\mathbb{N}^+$, defined as $T_{k}$, is a positive monotonic transformation, then it holds that 
$\mathcal{D}_s(T_{k-1}(F),T_{k}(F))=\mathcal{D}_s(F,F).$
\end{theorem}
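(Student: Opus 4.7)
\medskip

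The plan is to reduce the claim to the single observation that a strictly increasing transformation preserves the location of every local minimum. Concretely, I will argue that for any positive monotonic $T$ and any function $G:\mathbb{R}^n\to\mathbb{R}$, a point $x^{*}$ is a local minimum of $G$ if and only if $x^{*}$ is a local minimum of $T\!\circ G$. Once this is in hand, the theorem is essentially immediate, since both $T_{k-1}(F)$ and $T_{k}(F)$ will share the same ordered set of local minima as $F$ itself.

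First I would formalize the equivalence. Positive monotonicity gives $a>b\Leftrightarrow T(a)>T(b)$, and combining with the trivial case $a=b\Leftrightarrow T(a)=T(b)$ yields $a\ge b\Leftrightarrow T(a)\ge T(b)$. Applying this pointwise in a neighborhood of a candidate $x^{*}$: $G(x)\ge G(x^{*})$ for all $x$ in some neighborhood $\mathcal{N}$ of $x^{*}$ if and only if $T(G(x))\ge T(G(x^{*}))$ for all $x\in\mathcal{N}$. Therefore the local minimum sets of $G$ and $T\!\circ G$ coincide as subsets of $\mathbb{R}^n$, and in particular have the same cardinality and can be endowed with the same ordering.

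Next I would apply this to $G=F$ with $T=T_{k-1}$ and $T=T_k$ respectively, obtaining $\mathcal{LM}_{k-1}=\mathcal{LM}_{k}=\mathcal{LM}_{F}$ as ordered sets. The cardinality condition in the definition of $\mathcal{D}_s$ is thus satisfied, and for each index $i$ we have $x^{*}_{i,k-1}=x^{*}_{i,F}=x^{*}_{i,k}$, so $\|x^{*}_{i,k}-x^{*}_{i,k-1}\|_2=0$. Summing gives $\mathcal{D}_s(T_{k-1}(F),T_k(F))=0$. The same argument applied to the trivial (identity) transformation gives $\mathcal{D}_s(F,F)=0$, and the equality in the statement follows.

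I do not anticipate a significant obstacle: the only subtle point is making sure the equivalence $a\ge b\Leftrightarrow T(a)\ge T(b)$ is derived cleanly from Definition~\ref{PMT} (strict preservation of strict order), and that the ordering used to index $\mathcal{LM}_k$ is chosen consistently across $k$ so that pairing the identical minimizer sets yields a termwise-zero sum rather than an artifact of relabeling. Both are straightforward.
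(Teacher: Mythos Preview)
Your proposal is correct and takes essentially the same approach as the paper: both rest on the observation that a positive monotonic transformation preserves the set of local minimizers of $F$, so $\mathcal{LM}_{k-1}=\mathcal{LM}_k=\mathcal{LM}_F$ and both sides of the asserted equality vanish. The paper's proof is a one-line appeal to Definition~\ref{PMT}; yours spells out the equivalence $a\ge b\Leftrightarrow T(a)\ge T(b)$ and the resulting identification of local minima more carefully, but the underlying argument is identical.
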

\begin{proof}
The conclusion is derived from Definition \ref{PMT}, since the transformed/encrypted function $T_{k}(F)$ shares the same corresponding minimal points with the original objective function $F$ for each $k\in\mathbb{N}^+$. 
\end{proof}
\end{sloppypar}

\begin{sloppypar} 
Positive monotonic transformations/encryptions can be any strictly monotonic increasing functions, such as linear functions, exponential functions and power functions with odd positive power. 
We give the direct example of a kind of linear transformations/encryptions with details here.
\begin{example}
We denote the linearly transformed/encrypted objective function as $F_k=C_{1,k}F+C_{2,k}$, where the constants $C_{1,k}, C_{2,k}\in \mathbb{R}$, and $C_{1,k}>0$, for arbitrary $k \in \mathbb{N}^+$. Then it holds that $\mathcal{D}_s(F_{k-1},F_k)=\mathcal{D}_s(F,F).$
\end{example}
\end{sloppypar}

The example above involves multiplication and addition, which are useful for the application to private black-box optimization problems. To study the model solution shift $\mathcal{D}_m$, we firstly derive the following theorem of the simplest example, the translation transformation.
\begin{theorem}\label{corollary5.4-xpc-2-test}
Suppose that the function $F:\mathbb{R}^{n} \rightarrow \mathbb{R}$ is the original objective function, and $Q_{k}$ is its quadratic model function at the $k$-th iteration given by NEWUOA when solving the problem without transformations. If the objective function $F$ is transformed/encrypted to $F_{k}=F+C_{2,k} $ at the $k$-th iteration, for each $k \in \mathbb{N}^{+}$, then its model function given by DFOp, denoted by $\bar{Q}_{k}$ at the $k$-th iteration, satisfies that $\bar{Q}_{k}=Q_{k}+C_{2,k}$, and thus $\mathcal{D}_{m}(F_{k-1},F_k)=\mathcal{D}_{m}(F,F)$.
\end{theorem}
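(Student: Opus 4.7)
The plan is to prove both conclusions by induction on the iteration index $k$, with the inductive hypothesis being the identity $\bar{Q}_{k} = Q_{k} + C_{2,k}$. For the base case, one observes that the initial DFOp model and the initial NEWUOA model are built from the same interpolation set using the same least Frobenius norm construction, only with the right-hand sides $F_0(x_i) = F(x_i)+C_{2,0}$ in place of $F(x_i)$; by linearity of the construction (or equivalently, by noting that the constant $C_{2,0}$ is itself a quadratic function with zero Hessian and zero gradient and thus trivially minimizes the Frobenius-norm objective among translates), we get $\bar{Q}_0 = Q_0 + C_{2,0}$ immediately.

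For the inductive step, the plan is to exploit the explicit DFOp update formula assembled from (\ref{3.2})--(\ref{3.4}) and the specific form of the vector $r$ in (\ref{3.3}). Using $F_k = F + C_{2,k}$, each off-diagonal component of $r$ collapses to $F_{k+1}(x_i)-F_k(x_i) = C_{2,k+1}-C_{2,k}$, and the $t$-th component equals $(F(x_{\text{new}})-Q_{\text{old}}(x_{\text{new}})) + (C_{2,k+1}-C_{2,k})$ after substituting $\bar{Q}_{\text{old}} = Q_{\text{old}} + C_{2,k}$ from the inductive hypothesis. Thus I decompose
\begin{equation*}
r_{\text{DFOp}} \;=\; r_{\text{NEWUOA}} + (C_{2,k+1}-C_{2,k})\,\mathbf{1}_m,
\end{equation*}
where $\mathbf{1}_m \in \mathbb{R}^{m+n+1}$ has a $1$ in each of the first $m$ entries and a $0$ in the remaining $n+1$ entries, and $r_{\text{NEWUOA}}$ is the vector that would appear in an untransformed NEWUOA update on the same interpolation data.

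The main technical step is to evaluate $\mathbf{H}\,\mathbf{1}_m$, i.e., to solve $\mathbf{W}\,v = \mathbf{1}_m$ for $v = (\lambda_\star^\top, c_\star, g_\star^\top)^\top$. The corresponding quadratic $D_\star(x) = c_\star + (x-x_0)^\top g_\star + \tfrac{1}{2}\sum_j \lambda_{\star,j}((x-x_0)^\top(x_j-x_0))^2$ must interpolate $1$ at every $x_i$ while the KKT relations (\ref{3.2}) force $\sum \lambda_{\star,j}=0$ and $\sum \lambda_{\star,j}(x_j-x_0)=0$; the constant function $D_\star\equiv 1$, i.e.\ $\lambda_\star = 0$, $g_\star = 0$, $c_\star = 1$, clearly satisfies these conditions and has zero Hessian, so by uniqueness of the least Frobenius norm solution it is the answer. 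Consequently, adding $(C_{2,k+1}-C_{2,k})\,\mathbf{1}_m$ to $r$ only shifts the constant coefficient of $D$ by $(C_{2,k+1}-C_{2,k})$ and leaves $\lambda$ and $g$ unchanged, so $D_{\text{DFOp}} = D_{\text{NEWUOA}} + (C_{2,k+1}-C_{2,k})$. Adding this to $\bar{Q}_{\text{old}} = Q_{\text{old}} + C_{2,k}$ yields $\bar{Q}_{\text{new}} = Q_{\text{new}} + C_{2,k+1}$, which closes the induction.

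Finally, since $\bar{Q}_k - Q_k$ is a constant, $\nabla \bar{Q}_k = \nabla Q_k$ and $\nabla^2 \bar{Q}_k = \nabla^2 Q_k$, so the trust-region subproblem (\ref{3.5}) is identical in both runs, giving the same minimizer $d_k$ and hence $\mathcal{D}_m(F_{k-1},F_k) = \|d_k\|_2 = \mathcal{D}_m(F,F)$. The main obstacle I anticipate is precisely the computation of $\mathbf{H}\,\mathbf{1}_m$: it is the one place where the proof must use more than algebraic manipulation, relying on the uniqueness of the least Frobenius norm interpolant to identify the solution with the constant function; the rest of the argument is bookkeeping built on the inductive hypothesis and linearity of the DFOp update.
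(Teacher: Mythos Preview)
Your proposal is correct, but it takes a more computational route than the paper's own proof. The paper argues abstractly at the level of the optimization problem~(\ref{leastfrob-2}): once the inductive hypothesis gives $\bar{Q}_{k-1}=Q_{k-1}+C_{2,k-1}$, the two Hessians agree, so both $\bar{Q}_k$ and $Q_k+C_{2,k}$ minimize the same Frobenius-norm objective subject to the same (shifted) interpolation constraints, and uniqueness finishes the step---no explicit linear algebra is needed. Your approach instead works through the update formula and identifies $\mathbf{H}\,\mathbf{1}_m$ with the constant interpolant; this is exactly the computation the paper carries out later, in the Remark following Theorem~\ref{Tao} (equation~(\ref{otherwayofC+})), as an alternative verification. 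So your argument is not novel to the paper, just placed differently: the paper's main proof is shorter and conceptually cleaner, while yours makes explicit precisely \emph{where} the extra constant lands in the parameter vector $(\lambda,c,g)$.

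One structural point you should tighten: the equality of interpolation sets $X_k=\bar{X}_k$ (equivalently, that both runs choose the same $x_{\text{new}}$ and drop the same $x_t$) must be carried as part of the inductive hypothesis, not deduced only at the end. Your final paragraph correctly observes that equal models yield equal trust-region steps, but you invoke this only to conclude $\mathcal{D}_m(F_{k-1},F_k)=\mathcal{D}_m(F,F)$; in fact you already relied on it implicitly when you wrote ``on the same interpolation data'' in the inductive step, since otherwise the matrices $\mathbf{H}$ in the two runs need not coincide and the decomposition $r_{\text{DFOp}}=r_{\text{NEWUOA}}+(C_{2,k+1}-C_{2,k})\mathbf{1}_m$ would not be comparing like with like. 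The paper makes this explicit by inducting simultaneously on $\bar{Q}_{k-1}=Q_{k-1}+C_{2,k-1}$ and $\bar{X}_{k-1}=X_{k-1}$.
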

\begin{proof}
We denote the initial interpolation sets of $F$ and $F+C_{2,1}$ as $X_1$ and $\bar{X}_1$ respectively. For a certain problem, it holds that $X_1=\bar{X}_1$. Besides, we denote the initial quadratic models of $F$ and $F+C_{2,1}$ as $Q_1$ and $\bar{Q}_1$ separately. Since there is no difference of both initial interpolation sets and the finite difference methods used to form the initial quadratic model, it holds that
$\bar{Q}_1={Q}_1+C_{2,1}$.  

Suppose that $\bar{Q}_{k-1}={Q}_{k-1} +C_{2,k-1}$ and $X_{k-1}=\bar{X}_{k-1}$, where $X_{k-1}$ is the interpolation set of ${Q}_{k-1}$,  and $\bar{X}_{k-1}$ is the interpolation set of $\bar{Q}_{k-1}$. Then the new interpolation points $x_{\text{new}}' \in X_k$ and $\bar{x}_{\text{new}}' \in \bar{X}_{k}$ satisfy that $x_{\text{new}}'=\bar{x}_{\text{new}}',$ and the dropped interpolation points $x_{t}' \in X_{k-1}$ and $\bar{x}_{t}' \in \bar{X}_{k-1}$ satisfy that $x_{t}'=\bar{x}_{t}'$. Since $X_k=X_{k-1} \cup \{x_{\text{new}}'\}\backslash\{x_t'\}\ \text{and}\ \bar{X}_k=\bar{X}_{k-1} \cup \{\bar{x}_{\text{new}}'\}\backslash\{\bar{x}'_t\},$ we have $X_{k}=\bar{X}_{k}$. 

According to the interpolation conditions of NEWUOA and DFOp, we know that the quadratic model function $\bar{Q}_k$ is the unique solution of the problem
\begin{equation*}
\min_{Q} \ \Vert \nabla^2Q-\nabla^2 \bar{Q}_{k-1} \Vert_F^2,\ 
\text{subject to} \ Q(x_i)=F(x_i)+C_{2,k} , \ x_i \in \bar{X}_k.
\end{equation*}
In addition, according to the definition, the quadratic function $Q_k$ is the unique solution of problem (\ref{leastfrob}). Hence the quadratic function ${Q}_{k}+C_{2,k-1}$ is the unique solution of
\begin{equation*}
\min_{Q} \ \Vert \nabla^2 Q-{\nabla^2 Q}_{k-1} \Vert_F^2, \ 
\text{subject to} \ {Q}(x_i)=F(x_i)+C_{2,k-1} , \ x_i \in X_k.
\end{equation*}
Thus the function ${Q}_{k}(x)+C_{2,k}$ satisfies that $Q_k(x_i)+C_{2,k}=F(x_i)+C_{2,k}, \ x_i \in X_k$, and is naturally the unique solution of 
\begin{equation*}
\min_{Q} \ \Vert \nabla^2 Q-\nabla^2\bar{Q}_{k-1} \Vert_F^2, \ 
\text{subject to} \ {Q}(x_i)=F(x_i)+C_{2,k} , \ x_i \in \bar{X}_k.
\end{equation*} 
Therefore, $\bar{Q}_{k}=Q_{k}+C_{2,k}$, $\forall k\in \mathbb{N}^{+}$. Consequently $\mathcal{D}_{m}(F_{k-1},F_k)=\mathcal{D}_{m}(F,F)$ holds. 
\end{proof}

\subsection{Transformations with no change of the model solution shift}
\label{Transformation maintaining model solution shift}

A natural question is whether there is any other transformation that can even keep the model solution shift do not change at each iteration, in addition to the translation transformation. Moreover, we can try to achieve such transformation if the answer is yes, since stronger analyses can be given for the case where there is even no change on the model solution shift. To answer the question above, we propose the following definition.
 
\begin{definition}
\label{N1}
Given $m$ known interpolation points $x_1,\ldots,x_m$ at the $k$-th iteration, containing the current minimum point $x_{\text{opt}}$, $\mathcal{Q}$ is the mapping from the $m$ function values at the $m$ points to the model $Q_k$ by solving problem (\ref{leastfrob-2}).  
\end{definition}

\begin{definition}\label{N1-2}
The operator $\operatorname{argtrmin}$ denotes getting the minimum point of $Q_k$ in the corresponding trust-region by solving problem (\ref{3.5}) at the $k$-th iteration.
\end{definition}

\begin{definition}\label{N1-3}
The mapping $\mathcal{N}=\operatorname{argtrmin}\circ\mathcal{Q}$ is named as the model minimum mapping, which illustrates the interpolation process and the subroutine TRSAPP in our DFOp for solving problem (\ref{3.5}). The mapping $\mathcal{N}^{-1}$ is the inverse mapping of $\mathcal{N}$.  
\end{definition}

\begin{sloppypar} 
Fig.\ref{communicative diagram} is a commutative diagram, in which $\mathcal{N}(\mathcal{F})=\mathcal{N}(\check{\mathcal{T}}_k(\mathcal{F}))$ means that the corresponding model functions separately based on the original function values and transformed/encrypted function values at the interpolation points given by DFOp share the same minimum point or the solution of the trust-region subproblem. $\mathcal{F}$ denotes the vector $(F(x_1),\ldots,F(x_m))^{\top}$, and $\check{\mathcal{T}}_k(\mathcal{F})$ denotes the vector $(\check{T}_k(F(x_1)),\ldots,\check{T}_k(F(x_m)))^{\top}$, where $x_i \in X_k, i=1,\ldots, m$, since for the black-box problem, $\mathcal{F}=(F(x_1),\ldots,F(x_m))^{\top}$ represents all characteristics of the objective function $F$. Besides, $\check{\mathcal{T}}_k$ denotes restricting the transformation $\check{T}_k$ in the $m$ function values $F(x_1),\ldots,F(x_m)$, and $\check{T}_k$ here is the transformation with no change of the model solution shift. 
\begin{figure}[hbtp]
\centering
\adjustbox{scale=1,center}{%
\begin{tikzcd}
\mathcal{F}\in \mathbb{R}^m \arrow[r, "\check{\mathcal{T}}_k"] \arrow[dd, bend right=90, "\mathcal{N}"swap] \arrow[d, "\mathcal{Q}"]
& \check{\mathcal{T}}_k(\mathcal{F}) \in\mathbb{R}^m \arrow[dd, bend left=90, "\mathcal{N}"] \arrow[d, "\mathcal{Q}"swap] \\
\begin{pmatrix}
\lambda_F\\
c_F\\
g_F
\end{pmatrix}
\in\mathbb{R}^{m+n+1}\arrow[d, "\operatorname{argtrmin}"]  
& 
\begin{pmatrix}
\lambda_T\\
c_T\\
g_T
\end{pmatrix}
\in\mathbb{R}^{m+n+1}\arrow[d, "\operatorname{argtrmin}"swap] 
\\
{d}^*_k\in\mathbb{R}^{n}  \arrow[r, equal] 
& 
{d}^*_k\in\mathbb{R}^{n}
\end{tikzcd}
}
\caption{Commutative diagram: $Q_k$ is obtained based on the known $Q_{k-1}$ and $(\lambda_F^{\top},c_F,g_F^{\top})^{\top}$ or $(\lambda_T^{\top},c_T,g_T^{\top})^{\top}$, and thus $\mathcal{Q}(\mathcal{F})$ and $\mathcal{Q}(\check{\mathcal{T}}_k(\mathcal{F}))$ are denoted by $(\lambda_F^{\top},c_F,g_F^{\top})^{\top}$ and $(\lambda_T^{\top},c_T,g_T^{\top})^{\top}$ for clearness. \label{communicative diagram}}
\end{figure}
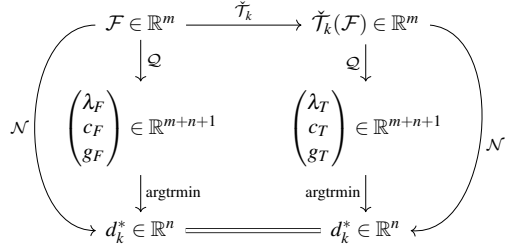
\end{sloppypar} 

We can derive the following theorem after giving the definition above.
\begin{theorem}
\label{Tao}
Given $d^*_k\in \mathbb{R}^n$, $\mathcal{N}^{-1}({d}^*_k)$ contains a translated linear space, of which the dimension is at least $m-n$, where $d^*_k$ denotes the solution of the quadratic model subproblem in the trust-region $\mathbb{B}_{\Delta_k}({x_{\text{opt}}})$. Then the transformation $\check{\mathcal{T}}_k: \mathcal{F}\rightarrow \check{\mathcal{T}}_k(\mathcal{F})$, satisfying that $\mathcal{N}(\mathcal{F})=\mathcal{N}(\check{\mathcal{T}}_k(\mathcal{F}))$ and then $\mathcal{D}_{m}(F_{k-1},F_k)=\mathcal{D}_{m}(F,F)$ at the $k$-th iteration, can be other transformations/encryptions in addition to the translation transformation referring to ${F}+C_{2,k}$, where $F_k=\check{T}_k(F)$. 
\end{theorem}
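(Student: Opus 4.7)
The plan is to leverage the two-layer structure $\mathcal{N}=\operatorname{argtrmin}\circ\mathcal{Q}$: I will show that the first layer $\mathcal{Q}$ is affine in $\mathcal{F}$, that the second layer imposes at most $n$ affine conditions on the output quadratic when $d^*_k$ is prescribed, and then use a straightforward codimension count to obtain the lower bound $m-n$. The existence of non-translational transformations is then immediate once the resulting affine subspace is compared with the one-dimensional translation line that Theorem \ref{corollary5.4-xpc-2-test} already places inside $\mathcal{N}^{-1}(d^*_k)$.

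First I would verify that $\mathcal{Q}$ is affine in $\mathcal{F}$. Writing $Q_k=Q_{k-1}+D$ and using the KKT system (\ref{BIG-(3.10)})--(\ref{3.3}) derived from (\ref{miniFrob}), the coefficients $(\lambda^\top,c,g^\top)^\top=\mathbf{H}(r^\top,0,\ldots,0)^\top$ are obtained by applying the fixed inverse matrix $\mathbf{H}$ from (\ref{H}) to a vector $r$ that is itself affine in $\mathcal{F}$. Since $Q_k$ is determined by these coefficients together with the fixed previous model $Q_{k-1}$, the map $\mathcal{Q}$ is affine, which is the structural content behind Fig.~\ref{communicative diagram}.

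Next I would analyze the prescription $\operatorname{argtrmin}(Q)=d^*_k$ on quadratic models. Writing the KKT conditions of the trust-region subproblem (\ref{3.5}), a quadratic $Q$ has $d^*_k$ as its trust-region minimizer iff there exists $\mu\geq 0$ with $\nabla Q(x_{\text{opt}}+d^*_k)+\mu d^*_k=0$, together with complementary slackness $\mu(\|d^*_k\|_2-\Delta_k)=0$ and the appropriate second-order positivity. For any reference point $\mathcal{F}^*\in\mathcal{N}^{-1}(d^*_k)$ with reference multiplier $\mu^*$, fixing $\mu=\mu^*$ turns the first-order condition into exactly $n$ affine equations on the coefficients of $Q$, while the second-order positivity is an open condition and so does not reduce the local dimension. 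Pulling these $n$ conditions back through the affine map $\mathcal{Q}$, which is of full rank because the interpolation matrix $\mathbf{W}$ from (\ref{H}) is invertible on poised sets, yields at most $n$ affine equations on $\mathcal{F}\in\mathbb{R}^m$. Hence $\mathcal{N}^{-1}(d^*_k)$ contains an affine subspace of dimension at least $m-n$, and translation by $\mathcal{F}^*$ exhibits it as a translated linear space of the advertised dimension.

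To deduce the final clause about non-translational transformations, I would combine this count with Theorem \ref{corollary5.4-xpc-2-test}: the translation family $\mathcal{F}\mapsto\mathcal{F}+C(1,\ldots,1)^\top$ sits as a one-dimensional line inside $\mathcal{N}^{-1}(d^*_k)$. Since DFOp always uses $m\geq n+2$, the subspace from the previous step has dimension at least $2$ and strictly contains the translation line; any direction transverse to $(1,\ldots,1)^\top$ within this subspace produces an admissible $\check{\mathcal{T}}_k$ that preserves $d^*_k$ yet is not a translation. The hard part will be making Step~2 rigorous in the boundary case $\|d^*_k\|_2=\Delta_k$, where $\mu$ is itself coupled to $Q$; my plan is to freeze $\mu$ at its reference value $\mu^*$ and restrict to a neighborhood of $\mathcal{F}^*$ in which the active-set structure and the second-order positivity are preserved, so that the local dimension count still returns $m-n$. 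A secondary technical point is confirming the full-rank claim for $\mathcal{Q}$, which follows from invertibility of $\mathbf{W}$ under the standard poisedness assumption on $\{x_1,\ldots,x_m\}$.
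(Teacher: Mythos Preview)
Your proposal is correct and follows essentially the same route as the paper: both arguments write the KKT conditions for the trust-region subproblem (\ref{3.5}) at the prescribed $d^*_k$, freeze the multiplier, and observe that the resulting first-order condition $(\nabla^2\check{Q}_k+\omega\mathbf{I})d^*_k=-\check{g}_k$ gives $n$ linear equations in the $m$ unknowns $\check{T}_k(F(x_1)),\ldots,\check{T}_k(F(x_m))$ once the affine dependence through $\mathbf{H}$ is unpacked; the paper then writes these equations out explicitly (its systems (\ref{1-2}) and (\ref{1-1})) whereas you phrase the same step as ``pulling back $n$ affine conditions through the affine map $\mathcal{Q}$,'' and both finish with the same $m-n\ge 2$ count to exceed the one-dimensional translation line. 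Your treatment of the boundary case and the second-order condition is, if anything, slightly more careful than the paper's, which simply remarks that one can always choose $\omega$ and restrict to transformed values giving a positive semidefinite model Hessian.
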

\begin{proof}
According to Definition \ref{N1}, $\mathcal{Q}$ aims to obtain the $k$-th quadratic model function $\check{Q}_k(x)$ from $\nabla^2\check{Q}_k$, $\check{c}_k$, and $\check{g}_k$.  Denote $\check{D}_{k}=\check{Q}_{k}-\check{Q}_{k-1}$ with the parameters $\lambda^{*}_k\in\mathbb{R}^{m}, c^{*}_k\in\mathbb{R}, g^{*}_k\in\mathbb{R}^{n}$. Then it holds that
\begin{equation*}\label{LS}
\begin{aligned}
\left(
({\lambda}^{*}_k)^{\top},
{c}^{*}_k,
({g}^{*}_k)^{\top}
\right)^{\top}=&\mathbf{H}_k\left(
\check{T}_k(F(x_{1})) - \check{Q}_{k-1}(x_{1}),
\ldots,
\check{T}_k(F(x_{m}))  - \check{Q}_{k-1}(x_{m}),
0,
\ldots,
0
\right)^{\top}.
\end{aligned}
\end{equation*}
When $\check{T}_k(F)=F$, which is the original objective function,  the solution of the trust-region subproblem (\ref{3.5}), $d^*_k$ say, can be obtained. Then we are going to obtain the space $\mathcal{N}^{-1}(d^*_k)$ for the known $d^*_k$, which also means to find other transformations/encryptions $\check{T}_k$ and their corresponding $\check{\mathcal{T}}_k$ that satisfy $\mathcal{N}(\check{\mathcal{T}}_k(\mathcal{F}))=\mathcal{N}(\mathcal{F})$. According to the KKT conditions, $d^*_k$ is the solution of the trust-region subproblem (\ref{3.5}), if and only if $d^*_k$ is feasible and there exists $\omega\ge 0$ satisfying that
\begin{align}
(\nabla^{2} \check{Q}_{k}+\omega \mathbf{I}) d^{*}_k&=-\check{g}_k,\label{1-1-1}\\
\omega(\Delta_k-\Vert d^*_k\Vert_2)&=0,\label{1-1-3}\\
\nabla^{2} \check{Q}_{k}+\omega \mathbf{I} &\succeq \mathbf{0},\label{1-1-4}
\end{align}
where $\mathbf{I} \in\mathbb{R}^{n\times n}$ is the identity matrix and (\ref{1-1-4}) denotes that $\nabla^{2} \check{Q}_{k}+\omega \mathbf{I}$ is positive semidefinite. 

If $\Vert d^{*}_k\Vert_{2} < \Delta_k$, then $\omega=0$ holds according to condition (\ref{1-1-3}). If $\nabla^{2} \check{Q}_{k}$ is positive semidefinite,  then condition (\ref{1-1-4}) holds naturally. We omit the other situations of $\nabla^{2} \check{Q}_{k}$, considering that we can always choose $\check{\mathcal{T}}_k(\mathcal{F})$ of which the corresponding model function has positive semidefinite Hessian. We denote the $j$-th row of the matrix $\mathbf{H}_{k}$ as $(\mathbf{H}_{k})_{j}$. Thus we can obtain $\check{\mathcal{T}}_k(\mathcal{F})=(\check{T}_k(F(x_1)),\ldots,\check{T}_k(F(x_m)))^{\top}$ by solving the linear equations of $\check{\mathcal{T}}_k(\mathcal{F})$, which are
%
\begin{equation}\label{1-2}
\begin{aligned}
&\sum_{j=1}^{m} \left(\left(x_{j}-x_{0}\right)\left(x_{j}-x_{0}\right)^{\top}d^*_k\right) \bigg((\mathbf{H}_k)_j\left(\check{T}_k({F}(x_{1})) - \check{Q}_{k-1}(x_{1}),\ldots,\right.\\
&\left. \check{T}_k({F}(x_{m})) - \check{Q}_{k-1}(x_{m}),0,\ldots, 0\right)^{\top}\bigg)+\nabla^2\check{Q}_{k-1}d^*_k\\
=&-\left((\mathbf{H}_k)^{\top}_{m+2},\ldots,(\mathbf{H}_k)^{\top}_{m+n+1}\right)^{\top}\bigg(\check{T}_k({F}(x_{1})) - \check{Q}_{k-1}(x_{1}),\ldots, \\
&\check{T}_k({F}(x_{1})) - \check{Q}_{k-1}(x_{m}),0,\ldots, 0\bigg)^{\top}-\check{g}_{k-1},
\end{aligned}
\end{equation}
since $\nabla^{2} \check{Q}_{k} d^{*}_k=-\check{g}_k$. 

If $\Vert d^{*}_k\Vert_{2} = \Delta_k$, then we can always choose a constant $\omega$ to achieve $\nabla^{2} \check{Q}_{k}+\omega \mathbf{I} \succeq \mathbf{0}$, and then we can obtain $\check{\mathcal{T}}_k(\mathcal{F})=(\check{T}_k(F(x_1)),\ldots,\check{T}_k(F(x_m)))^{\top}$  
by solving the linear equations of $\check{\mathcal{T}}_k(\mathcal{F})$, which are
 
\begin{equation}\label{1-1}
\begin{aligned}
&\sum_{j=1}^{m} \left(\left(x_{j}-x_{0}\right)\left(x_{j}-x_{0}\right)^{\top}d^*_k\right) \bigg((\mathbf{H}_k)_j\left(\check{T}_k({F}(x_{1})) - \check{Q}_{k-1}(x_{1}),\ldots,\right.\\
&\left. \check{T}_k({F}(x_{m})) - \check{Q}_{k-1}(x_{m}),0,\ldots, 0\right)^{\top}\bigg)+\nabla^2\check{Q}_{k-1}d^*_k+\omega \mathbf{I} d^{*}_k\\
=&-\left((\mathbf{H}_k)^{\top}_{m+2},\ldots,(\mathbf{H}_k)^{\top}_{m+n+1}\right)^{\top}\bigg(\check{T}_k({F}(x_{1})) - \check{Q}_{k-1}(x_{1}),\ldots, \\
&\check{T}_k({F}(x_{m})) - \check{Q}_{k-1}(x_{m}),0,\ldots, 0\bigg)^{\top}-\check{g}_{k-1},
\end{aligned}
\end{equation}
since $(\nabla^{2} \check{Q}_{k}+\omega \mathbf{I}) d^{*}_k=-\check{g}_k$. 

Hence, $\mathcal{N}^{-1}\left(d^{*}_k\right)$, as the solution space of linear system (\ref{1-2}) or (\ref{1-1}), contains a translated linear space whose dimension is at least $m-n$, since the number of unknown elements of $\check{\mathcal{T}}_k(\mathcal{F})=(\check{T}_k(F(x_{1})), \ldots, \check{T}_k(F(x_{m})))^{\top}$ is $m$ and there are $n$ equations in the linear system. 
Then we can derive that the transformation
$
\check{\mathcal{T}}_k: \mathcal{N}^{-1}({d}^*_k) \rightarrow \mathcal{N}^{-1}({d}^*_k)
$
is not unique, since the dimension of
$\mathcal{N}^{-1}({d}^*_k)$ is at least $m-n \geq 2$, and the restricted transformation $\check{\mathcal{T}}_k$ includes nonlinear or linear mapping. In addition, $(\check{T}_k ({F}(x_{1})), \ldots, \check{T}_k ({F}(x_{m})))^{\top}$ can be calculated by solving linear systems (\ref{1-2}) and (\ref{1-1}). The corresponding transformed/encrypted functions satisfy that $\mathcal{D}_{m}(F_{k-1},F_{k})=\mathcal{D}_{m}(F,F)$, if all of $F_k$, for each $k\in \mathbb{N}^+$, are transformed/encrypted in the way $F_k=\check{T}_k(F)$. The proof of Theorem \ref{Tao} is completed.
\end{proof}

Therefore, in addition to the translation transformation $F+C_{2,k}$, there are indeed more other transformations $\check{T}_k$ that satisfy $\mathcal{N}(\check{\mathcal{T}}_k(\mathcal{F}))=\mathcal{N}(\mathcal{F})$ and then $\mathcal{D}_m(F_{k-1},F_k)=\mathcal{D}_m(F,F)$, where $\check{\mathcal{T}}_k$ is the restriction of $\check{T}_k$.

\begin{remark}
According to the proof of Theorem \ref{Tao}, the transformation $F+C_{2,k+1}$ at each step keeps the model solution shift, since
\begin{equation}\label{otherwayofC+}
\begin{aligned} 
\left(
\lambda^{\top},
c+C_{2,k+1},
g^{\top}
\right)^{\top}=&\mathbf{H}_{k+1} \left(C_{2,k+1},\ldots,C_{2,k+1},F(x_{\text{new}})+C_{2,k+1}-Q_{\text{old}}(x_{\text{new}}),\right.\\
&\left.C_{2,k+1},\ldots,C_{2,k+1},0,\ldots,0\right)^{\top},
\end{aligned}
\end{equation}
where $F(x_{\text{new}})+C_{2,k+1}-Q_{\text{old}}(x_{\text{new}})$ is the $t$-th component,  and the number of $0$ is $n+1$, if
\begin{equation*}
\begin{aligned} 
\left(
\lambda^{\top},
c,
g^{\top}
\right)^{\top}=&\mathbf{H}_{k+1} \left(0,\ldots,0,F(x_{\text{new}})-Q_{\text{old}}(x_{\text{new}}),0,\ldots,0\right)^{\top}.
\end{aligned}
\end{equation*}
(\ref{otherwayofC+}) is based on the fact that 
\begin{equation*}
\mathbf{W}_{k+1}
\left(0,\ldots,0,C_{2,k+1},0,\ldots,0\right)^{\top}=\left(C_{2,k+1},\ldots,C_{2,k+1},0,\ldots,0\right)^{\top},
\end{equation*} 
 where $\mathbf{H}_{k+1}\mathbf{W}_{k+1}=\mathbf{I}$, and $\mathbf{I} \in\mathbb{R}^{(m+n+1)\times (m+n+1)}$ here is the identity matrix .
 The conclusion is consistent with Theorem \ref{corollary5.4-xpc-2-test}.
\end{remark}

\begin{sloppypar} 
The inverse transformations/encryptions of $\check{T}_k$ and other $T_k$ are difficult to calculate. However, our algorithm DFOp can still solve the problems with such transformations/encryptions exactly. In addition, the result and the discussion above can be applied to  encrypt the functions in the private black-box optimization problems. Moreover, the transformations/encryptions with no change of the model solution shift are exclusive for model-based derivative-free methods, and especially for DFOp.  
\end{sloppypar} 

\subsection{Other notes on transformations/encryptions and DFOp}

We derive the following corollaries to separately analyze the multiplication and the addition in the transformation, which are also prepared for the encryption in the private black-box optimization.
\begin{corollary}\label{xpc-1}
If the objective function {$F$} has been transformed/encrypted to $F_k=C_{1,k}F$ at the $k$-th query/iteration step, where $C_{1,k}\in \mathbb{R}$ is a positive constant, 
for each $ k \in \mathbb{N}^+$, then the corresponding vector $r\in\mathbb{R}^m$ in the model function updating formula of DFOp is
\begin{equation}\label{new-r-1}
\begin{aligned}
r=&\left(
(C_{1,k+1}-C_{1,k})F(x_1),
\ldots,
(C_{1,k+1}-C_{1,k})F(x_{t-1}),
\delta_k,\right.\\
&\left.(C_{1,k+1}-C_{1,k})F(x_{t+1}),
\ldots,
(C_{1,k+1}-C_{1,k})F(x_m)
\right)^{\top},
\end{aligned}
\end{equation}
in which, the parameter $\delta_k$ is defined by (\ref{delta_k}). 
\end{corollary}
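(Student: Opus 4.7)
The plan is to substitute the multiplicative transformation $F_k = C_{1,k} F$ directly into the simplified form of the vector $r$ that was already established at the start of Section 2.2 and derive each component in turn. Nothing structural about the least Frobenius norm updating framework changes under this encryption; only the numerical values of $r$ are affected.

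First I would recall that the simplified expression of $r$ derived just before (\ref{delta_k}) has entries $F_{k+1}(x_i) - F_k(x_i)$ for $i \neq t$, and $\delta_k$ in the $t$-th slot, where $\delta_k$ is given by (\ref{delta_k}). This simplification relied on the interpolation identity $F_k(x_{\text{opt}}) = Q_{\text{old}}(x_{\text{opt}})$, which is built into DFOp's interpolation set at iteration $k$ regardless of the particular encryption $T_k$ used to produce $F_k$; hence it remains valid under the multiplicative transformation considered here.

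Next, for each index $i \in \{1, \ldots, m\} \setminus \{t\}$, I would substitute $F_k(x_i) = C_{1,k} F(x_i)$ and $F_{k+1}(x_i) = C_{1,k+1} F(x_i)$ to obtain
\[
F_{k+1}(x_i) - F_k(x_i) = (C_{1,k+1} - C_{1,k}) F(x_i),
\]
which is exactly the $i$-th entry appearing in the claimed expression (\ref{new-r-1}). For the $t$-th component, no new computation is needed: the definition of $\delta_k$ in (\ref{delta_k}) already encodes $F_{k+1}(x_{\text{new}})$, $F_k(x_{\text{opt}})$, $Q_{\text{old}}(x_{\text{new}})$, and $Q_{\text{old}}(x_{\text{opt}})$ without requiring any further transformation-specific simplification, so this entry is carried over verbatim.

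Assembling these $m$ components gives precisely the vector $r$ in (\ref{new-r-1}), completing the argument. There is no real obstacle here; the corollary is a direct specialization of the general simplified $r$ to the case $F_k = C_{1,k} F$. The only point worth being careful about is confirming that the interpolation condition $F_k(x_{\text{opt}}) = Q_{\text{old}}(x_{\text{opt}})$ continues to hold so that the $t$-th entry can be written as $\delta_k$; this is automatic because $Q_{\text{old}}$ is by construction the quadratic interpolant of the values $F_k$ at the points of the current interpolation set, and $x_{\text{opt}}$ is one of those points.
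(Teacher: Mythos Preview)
Your proposal is correct and mirrors exactly the derivation the paper has in mind: the corollary is stated without proof because it follows by direct substitution of $F_k=C_{1,k}F$ into the simplified $r$ displayed just before~(\ref{delta_k}), which is precisely what you do. Your extra remark checking that $F_k(x_{\text{opt}})=Q_{\text{old}}(x_{\text{opt}})$ still holds under the multiplicative encryption is the only point requiring care, and you handle it correctly.
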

\begin{corollary}\label{xpc-2}
If the objective function {$F$} has been transformed/encrypted to be $F_k=F+C_{2,k}$ at the $k$-th query/iteration step, where the constant $C_{2,k}\in \mathbb{R}$, for each $k \in \mathbb{N}^+$, then the corresponding vector $r\in \mathbb{R}^m$ in the model function updating formula of DFOp is
\begin{equation}\label{new-r-2}
r=\left(
C_{2,k+1}-C_{2,k},\ldots,
C_{2,k+1}-C_{2,k},
\delta_k,
C_{2,k+1}-C_{2,k},
\ldots,
C_{2,k+1}-C_{2,k}
\right)^{\top},
\end{equation}
in which, the $t$-th component $\delta_k$ is defined by (\ref{delta_k}) as well.
\end{corollary}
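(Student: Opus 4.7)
The plan is to apply the general formula for the vector $r$ derived in the simplification details subsection directly, with the additive transformation $F_k = F + C_{2,k}$ substituted in. Recall that after invoking $F_k(x_{\text{opt}}) = Q_{\text{old}}(x_{\text{opt}})$ to clear the $t$-th row of (\ref{3.3}), the vector $r$ takes the form
\begin{equation*}
r = (F_{k+1}(x_1)-F_k(x_1),\ldots,F_{k+1}(x_{t-1})-F_k(x_{t-1}),\delta_k,F_{k+1}(x_{t+1})-F_k(x_{t+1}),\ldots,F_{k+1}(x_m)-F_k(x_m))^{\top},
\end{equation*}
where $\delta_k$ is defined by (\ref{delta_k}).

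First I would handle every index $i\neq t$ by direct substitution: $F_{k+1}(x_i) = F(x_i)+C_{2,k+1}$ and $F_k(x_i) = F(x_i)+C_{2,k}$, so the original function values cancel and each such entry reduces to $C_{2,k+1}-C_{2,k}$. This immediately produces the $m-1$ identical components claimed in (\ref{new-r-2}).

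Then I would observe that the $t$-th component remains $\delta_k$ by definition, as stated; no expansion is forced by the statement of the corollary (and indeed $\delta_k$ has its own meaning as the mismatch between the objective shift and the model shift from $x_{\text{opt}}$ to $x_{\text{new}}$, which need not collapse to a constant even under a pure additive encryption). Combining the two cases yields the advertised form of $r$. I do not anticipate any genuine obstacle: the argument is a one-line substitution into the simplified formula for $r$, and the only point requiring attention is to keep the $t$-th coordinate separate from the others, since at index $t$ the quantity $Q_{\text{old}}(x_{\text{new}})$, rather than $F_k(x_{\text{new}})$, appears in the construction and blocks the clean cancellation that occurs at the remaining indices.
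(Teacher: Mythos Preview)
Your proposal is correct and matches the paper's approach exactly: the paper presents this corollary without an explicit proof, as it is an immediate specialization of the general simplified formula for $r$ obtained via direct substitution of $F_k=F+C_{2,k}$. Your handling of the $t$-th coordinate versus the remaining $m-1$ coordinates is precisely the intended computation.
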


From Corollary \ref{xpc-1} and Corollary \ref{xpc-2}, we can learn that $m-1$ components of the vector $r$ are proportionate in (\ref{new-r-1}) and all the same in (\ref{new-r-2}). The splitted transformations/encryptions are shown in direct forms, which will simplify our computation in the model updating. Results indicate that we can only get the values of $\frac{C_{1,k+1}}{C_{1,k}}$ and $C_{2,k+1}-C_{2,k}$, but can never get the original values of $F(x_i)$ from $F_k(x_i)$, $\forall k\in \mathbb{N}^+$ directly.

 \begin{sloppypar}
Moreover, DFOp is designed not only for problems with linearly transformed/encrypted objective functions, but also for problems of which the objective functions are transformed/encrypted by other transformations. For instance, the  transformations/encryptions can be the random linear transformations/encryptions used in the private black-box optimization, which will be introduced in Section \ref{Solving private black-box optimization problems with DFOp}. In addition, the problems with positive monotonic transformations/encryptions or the transformations/encryptions given in subsection \ref{Transformation maintaining model solution shift} can also be solved by DFOp. The inverse transformations/encryptions of the transformations/encryptions above can not be directly derived. 
Besides, the solver NEWUOA combined with computing the inverse transformations/encryptions at each step will bring a huge amount of computation, especially when there exist random linear transformations/encryptions or other complex transformations. Therefore, DFOp can be widely utilized, since it can solve problems with multiple transformations, without calculating the inverse transformation at each step but only using the uniform simple model updating formula.
\end{sloppypar}
 
\section{More Discussion on Privacy-preserving Black-box Optimization}
\label{Solving private black-box optimization problems with DFOp} 

This section will focus on the details of our new solver DFOp (in the experimental results, it is denoted as NEWUOA-P, reflecting its implementation within the NEWUOA framework for the purpose of direct comparison) when it is applied for solving the private black-box optimization problems. 
In the step-encryption\footnote{The term step-encryption means that the encryption depends on the iteration/query step.} private problems, the corresponding objective function will transform with the iteration in the output. Thus we try to apply DFOp to solve problem (\ref{1.1}). In order to analyze DFOp's performance when solving the private  black-box optimization problems, we firstly give the following theorem.
\begin{theorem}\label{P_TF}
The encryption of the private black-box optimization problem defined in subsection \ref{Private-BBO} is a transformation.
\end{theorem}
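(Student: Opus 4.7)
The plan is to exhibit an explicit scalar map $T_k:\mathbb{R}\to\mathbb{R}$ for each iteration $k$ such that the encrypted objective satisfies $F_k(\x) = T_k(F(\x))$ for every $\x\in\mathbb{R}^n$, which is precisely the requirement in the definition of a transformation given at the start of the paper. First I would invoke the structure of the private problem from subsection~\ref{Private-BBO}: the objective decomposes as $F = f + h$ with $f$ public and $h$ the private component, and at iteration $k$ the encryption replaces $h$ by $h_k$ obtained by adding noise. Denote by $\eta_k$ the noise realized at iteration $k$. The key fact, already built into the setup, is that all queries within a single iteration are processed by the same encryption mechanism, so $\eta_k$ is a single scalar that does not depend on the particular query point.

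I would then carry out the direct calculation
\begin{equation*}
F_k(\x) \;=\; f(\x) + h_k(\x) \;=\; f(\x) + h(\x) + \eta_k \;=\; F(\x) + \eta_k,
\end{equation*}
and define $T_k:\mathbb{R}\to\mathbb{R}$ by $T_k(y) = y + \eta_k$. This $T_k$ is a well-defined scalar map on $\mathbb{R}$, and the identity $F_k(\x) = T_k(F(\x))$ holds for every $\x\in\mathbb{R}^n$, which matches the definition of transformation exactly. Hence the encryption of the private black-box optimization problem is a transformation in the sense of that definition.

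The main subtlety in the argument is justifying that $\eta_k$ may indeed be treated as a single scalar shared across all queries at iteration $k$, so that the dependence on $\x$ enters only through $F(\x)$. This is what allows the encryption to collapse into a genuine scalar transformation rather than a map depending on $\x$ through channels external to $F(\x)$. The justification rests on the convention adopted in subsection~\ref{Private-BBO} that queries within the same iteration share an identical transformation, and it is consistent with the noise-adding mechanisms detailed in Section~\ref{Solving private black-box optimization problems with DFOp}, so no extra structural hypothesis is required beyond what the problem setup already provides.
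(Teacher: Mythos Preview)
The paper states Theorem~\ref{P_TF} without supplying a proof, treating it as essentially immediate from the setup, so there is no argument to compare against. Your proof is correct: you exhibit $T_k(y)=y+\eta_k$ and verify $F_k(\x)=T_k(F(\x))$, and you correctly isolate the one nontrivial point, namely that the noise realized at iteration $k$ is a single scalar shared by all query points, which is precisely the convention built into the paper's framework.

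One minor remark: the mixed mechanism introduced later in Section~\ref{Solving private black-box optimization problems with DFOp} also allows a multiplicative component, $h_k(\x)=h(\x)+\gamma_k\bigl(f(\x)+h(\x)\bigr)$, which gives $F_k(\x)=(1+\gamma_k)F(\x)$ and hence $T_k(y)=(1+\gamma_k)y$ rather than the purely additive form you wrote. This does not affect the validity of your argument for Theorem~\ref{P_TF}, since that theorem refers to the generic ``adding noises'' description in subsection~\ref{Private-BBO} and precedes the introduction of the specific mechanisms; the explicit treatment of both mechanisms appears later in Theorem~\ref{EM_LT}.
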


When solving private black-box optimization problems, for the objective function $F=f+h$, the vector $r=r-(0,\ldots,0)^{\top}$ in the model updating formula of DFOp has the form
\begin{equation*}
\begin{aligned}
r=&\left(
h_{k+1}(x_1)-h_k(x_1),
\ldots,
h_{k+1}(x_{t-1})-h_k(x_{t-1}),
\delta_k,\right.\\
&\left. h_{k+1}(x_{t+1})-h_k(x_{t+1}),
\ldots,
h_{k+1}(x_m)-h_k(x_m)
\right)^{\top},
\end{aligned}
\end{equation*}
in which, the parameter $\delta_k$ is defined by
\begin{equation}\label{delta_k-2}
\begin{aligned}
\delta_k= (f(x_{\text{new}})+h_{k+1}(x_{\text{new}})-f(x_{\text{opt}})-h_k(x_{\text{opt}}))
-(Q_{\text{old}}(x_{\text{new}})-Q_{\text{old}}(x_{\text{opt}})).
\end{aligned}
\end{equation}

\begin{remark}\label{callingtime}
Computing the function $h_k$ at $m$ points at the $k$-th step does not cause expensive evaluations for the data provider, since the function evaluation at a point only needs to be done once, and the extra operation is the encryption, which is usually adding a noise. Therefore, providing the function values called by DFOp is acceptable to the data provider. It will be shown that there can be no privacy risk to provide $m$ encrypted function evaluations. In addition, we compare the numbers of calling the function $f$ in the numerical experiments according to the discussion above.
\end{remark}

\subsection{Differentially private noise-adding mechanisms}
\label{Differentially}

The encryption mechanism here should be carefully selected by the data provider.  We propose the encryption mechanisms according to the linear transformation we discussed above and the characteristics of derivative-free optimization. 
 
The output function values called by DFOp, can be regarded as the random output from the similar random distributions, which can prevent attackers. The differentially private mechanisms we are going to propose are linear transformations/encryptions and they hold the optimality-preserving property. 

In order to propose two differentially private mechanisms with the proof of the differential privacy, we present the definition of the neighboring databases.
\begin{definition}\label{adjacent dataset}
Suppose that ${z}$ 
and ${z}'$ 
are data sets. ${z}$ and ${z}'$ are called neighboring databases if $\vert{z}\triangle {z}'\vert=1$, where $\triangle$ denotes the symmetric difference between two sets, which is defined as ${z}\triangle {z}'=(z\cup z')\setminus(z\cap z')$, and $\vert\cdot\vert$ denotes the cardinality.
\end{definition}

Besides, the definition of differential privacy is given in the following.
\begin{definition} (\cite{Dwork2006})
\label{differential privacy}
A randomized operator $\mathcal{A}$ is ${\varepsilon}_{B}$-differentially private if for all neighboring databases ${z}$, ${z}^{\prime}$ and all sets $\mathcal{S}$ of outputs, it holds that
\begin{equation*}
\operatorname{Pr}(\mathcal{A}({z}) \in \mathcal{S}) \leq e^{{\varepsilon}_{B}} \cdot \operatorname{Pr}(\mathcal{A}({z}^{\prime}) \in \mathcal{S}),
\end{equation*}
where $\operatorname{Pr}(\mathcal{A}({z}) \in \mathcal{S})$ denotes the probability of $\mathcal{A}({z}) \in \mathcal{S}$. The parameter ${\varepsilon}_{B}$ is called the privacy budget, and a smaller privacy budget refers to a safer encryption.
\end{definition}
   
In the following parts, we use two kinds of operators to illustrate the differential privacy of the additive mechanism and the mixed mechanism that will appear then.

\subsection{The differentially private mechanism based on Laplacian noise}

In the private black-box optimization, we assume that the encryption has the form $h_k(x)\ne h(x)$, for arbitrary $x \in \mathbb{R}^n$. We denote the $i$-th interpolation set as $X_i$. The data set is $z_k=\cup_{i=1}^{k}
\{h(x), \ x \in X_i\}$, 
and we know that for each $k$, $z_{k-1}$ and $z_{k}$ are neighboring databases. Besides, $\hat{x}_j$ denotes the $j$-th point in the ordered set $\cup_{i=1}^{k} X_i$ in this section, and it will be often used in the proof for clarity.

We try to add Laplacian noise by the operator  $\mathcal{A}_k$ to encrypt $h$ at the $k$-th iteration. The operator $\mathcal{A}_k$ is defined in the following.
\begin{definition}\label{definition4.3}
At the $k$-th iteration, we define the operator $\mathcal{A}_k$ as
$\mathcal{A}_k(z_j)=h(\hat{x}_j)+C\eta_k, \ j=1,\ldots, k$, where $\eta_k \sim \operatorname{Lap}(b_k)$, $b_k>0$, and $C$ is a positive constant. 
\end{definition}

The probability density function of $\operatorname{Lap}(b_k)$ is $p(x)=\frac{1}{2b_k}\cdot e^{-\frac{\vert x \vert}{b_k}}$.
\begin{remark}
Global coefficient $C$ is newly designed  to control the signal-to-noise ratio. 
\end{remark}

To describe the differential privacy of  $\mathcal{A}_k$, we give the following definition.
\begin{definition}
We define the global sensitivity of the function $h$ at the $k$-th iteration as
\begin{equation}\label{4.1}
\text{GS}_{h,k}=\max_{k-m+1\le l \le k-1} \left\vert h(x^*_l)-h(x^*_{l+1}) \right\vert,
\end{equation}
where $x^*_l$ and $x^*_{l+1}$ are exactly corresponding new points in two adjacent ordered iteration points sets $\cup_{i=1}^{l} X_i$ and $\cup_{i=1}^{l+1} X_i$, which satisfy that $\cup_{i=1}^{l+1}X_i=\cup_{i=1}^{l} X_i \cup \{x^*_{l+1}\}$.
\end{definition}

The global sensitivity $\text{GS}_{h,k}$ is used to describe the change of the function values over a set. Then we are going to prove that the operator $\mathcal{A}_k$ is differentially private, which means that it can encrypt the private function $h$. The following lemma follows from basic probability theory. 
\begin{lemma}\label{lemma4.6}
Suppose that $\eta \sim \operatorname{Lap}(b_k)$, $b_k>0$, the constant $M>0$, then for arbitrary constants $a,b \in \mathbb{R}$, which satisfy that $a<b$, the inequality $\operatorname{Pr}(a+M\le\eta \le b+M) \le e^{\frac{M}{b_k}} \cdot \operatorname{Pr}(a\le \eta \le b)$ holds.
\end{lemma}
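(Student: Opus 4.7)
The plan is to reduce the inequality to a pointwise comparison of the Laplace density, then apply the triangle inequality. First I would express both probabilities as integrals of the density $p(x) = \frac{1}{2b_k}\me^{-\vert x\vert/b_k}$ over intervals of the same length. Concretely, by the substitution $u = \eta - M$ in the first probability,
\begin{equation*}
\operatorname{Pr}(a+M \le \eta \le b+M) = \int_a^b \frac{1}{2b_k}\me^{-\vert u+M\vert/b_k}\,\md u,
\end{equation*}
while the second probability is $\int_a^b \frac{1}{2b_k}\me^{-\vert u\vert/b_k}\,\md u$.

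Next I would establish the pointwise bound
\begin{equation*}
\me^{-\vert u+M\vert/b_k} \le \me^{M/b_k}\cdot \me^{-\vert u\vert/b_k}
\qquad \text{for all } u \in \mathbb{R},
\end{equation*}
which is equivalent to $\vert u\vert - \vert u+M\vert \le M$. This follows immediately from the triangle inequality: $\vert u\vert = \vert (u+M) - M\vert \le \vert u+M\vert + \vert M\vert = \vert u+M\vert + M$, where the last equality uses $M>0$.

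Finally, I would integrate the pointwise estimate over $[a,b]$ and pull the constant $\me^{M/b_k}$ out of the integral to obtain the stated inequality. There is no real obstacle here; the only subtlety is being careful with signs in the triangle inequality (using $M>0$ to replace $\vert M\vert$ by $M$), and making sure the substitution and the integration limits match on both sides so that the inequality reduces to a clean pointwise comparison of densities.
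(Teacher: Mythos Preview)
Your proof is correct and follows the same overall strategy as the paper: both reduce the probability inequality to the pointwise density bound $p(u+M)\le e^{M/b_k}\,p(u)$ (equivalently, the paper's $p(s)\le e^{M/b_k}\,p(s-M)$) and then integrate. The only difference is in how that pointwise bound is verified: the paper computes the ratio $p(s)/p(s-M)$ explicitly and bounds it via a three-way case split on the signs of $s$ and $s-M$, whereas you obtain the equivalent inequality $\lvert u\rvert-\lvert u+M\rvert\le M$ in one line from the triangle inequality, which is cleaner and avoids the case analysis.
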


\begin{proof}
What we need to prove is that, for arbitrary $ s \in \mathbb{R}$, it holds that
\begin{equation*}
p(s) \le e^{\frac{M}{b_k}}\cdot p(s-M),
\end{equation*}
where $p(\cdot)$ is the probability density function of $\operatorname{Lap}(b_k)$. We know that
\begin{equation*}
\frac{p(s)}{p(s-M)}=\frac{\frac{1}{2 b_k} e^{-\frac{\vert s\vert}{b_k}}}{\frac{1}{2 b_k} e^{-\frac{\vert s-M\vert}{b_k}}}.
\end{equation*}
Then, we derive that
\begin{equation*}
 \frac{p(s)}{p(s-M)}=\left\{
\begin{aligned}
&e^{-\frac{M}{b_k}},  \ \ \ \ 0 \leq s-M<s,\\
&e^{\frac{-2 s+M}{b_k}},\  s-M \leq 0 < s\ \text{or}\ s-M < 0 \leq s,\\
&e^{\frac{M}{b_k}},\ \ \ \ \ \ s-M<s \leq 0,
\end{aligned} \right. 
\end{equation*}
which is bounded by
\begin{equation*}
\ \left\{
\begin{aligned}
&e^{-\frac{M}{b_k}},  \ \ \ \ 0 \leq s-M<s,\\
&e^{\frac{M}{b_k}},\ \ \ \ \ \   s-M \leq 0 < s\ \text{or}\ s-M < 0 \leq s,\\
&e^{\frac{M}{b_k}},\ \ \ \ \ \   s-M<s \leq 0.
\end{aligned} \right. 
\end{equation*}
Owing to $e^{\frac{M}{b_k}}>e^{-\frac{M}{b_k}}$, $e^{\frac{M}{b_k}}$ is an upper bound of $\frac{p(s)}{p(s-M)}$. Hence the statement in Lemma \ref{lemma4.6} is proved.
\end{proof}

 With the help of Lemma \ref{lemma4.6}, we can derive that  $\mathcal{A}_k$ is differentially private.
\begin{theorem}\label{4.30-4}
At the $k$-th iteration with $m$ interpolation points $\hat{x}_{k-m+1},\ldots,\hat{x}_k$, if $\eta_k$ only depends on the number of iterations $k$, then the operator $\mathcal{A}_k$ is $\varepsilon_k$-differentially private, where $\varepsilon_k=\frac{\text{GS}_{h,k}}{b_kC},$ and $\text{GS}_{h,k}$ is defined as (\ref{4.1}).
\end{theorem}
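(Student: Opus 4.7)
The plan is to adapt the classical Laplace-mechanism argument of Dwork \cite{Dwork2006} to the iteration-indexed encryption $\mathcal{A}_k$ of Definition \ref{definition4.3}. First I would fix two neighboring databases $z$ and $z'$ at the $k$-th iteration; by Definition \ref{adjacent dataset} they differ in exactly one element, which by construction of $z_k=\cup_{i=1}^{k}\{h(x), x\in X_i\}$ corresponds to a single evaluation point, say $\hat{x}_{j^*}$, at which the true values of $h$ differ. By the definition \eqref{4.1} of the global sensitivity, the magnitude of this discrepancy is bounded by $\mathrm{GS}_{h,k}$, which will be the only source of distinguishability between the two outputs.

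Second, since the hypothesis states that $\eta_k$ depends only on $k$, the rescaled perturbation $C\eta_k$ is distributed as $\operatorname{Lap}(Cb_k)$ and is shared across the $m$ coordinates of the output. Consequently, for any measurable set $\mathcal{S}$, the ratio $\Pr(\mathcal{A}_k(z)\in \mathcal{S})/\Pr(\mathcal{A}_k(z')\in \mathcal{S})$ reduces, after a change of variables that subtracts the common (unchanged) entries, to a ratio of Laplace densities whose arguments are shifted by exactly $M=|h(\hat{x}_{j^*})-h'(\hat{x}_{j^*})|\le \mathrm{GS}_{h,k}$. Applying Lemma \ref{lemma4.6} with this $M$ and with $b_k$ replaced by $Cb_k$ bounds the pointwise density ratio by $e^{M/(Cb_k)}\le e^{\mathrm{GS}_{h,k}/(Cb_k)}=e^{\varepsilon_k}$. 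Integrating this pointwise bound over $\mathcal{S}$ yields the required inequality from Definition \ref{differential privacy}.

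The main obstacle I expect is the careful bookkeeping between the scalar noise $\eta_k\sim \operatorname{Lap}(b_k)$ and the multi-coordinate joint density required by the differential-privacy inequality; the scaling constant $C$ must be propagated consistently so that the effective scale parameter becomes $Cb_k$, producing the denominator $b_kC$ in $\varepsilon_k$ rather than $b_k$ alone. Beyond that accounting, the argument is a direct application of Lemma \ref{lemma4.6}, so no additional estimates should be needed.
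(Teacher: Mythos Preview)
Your proposal is correct and follows essentially the same route as the paper: both reduce the differential-privacy inequality to a shift bound on Laplace probabilities and invoke Lemma~\ref{lemma4.6}, the only cosmetic difference being that the paper keeps $\eta_k\sim\operatorname{Lap}(b_k)$ and divides the shift by $C$ (so $M=\text{GS}_{h,k}/C$) rather than rescaling the noise to $\operatorname{Lap}(Cb_k)$ as you do. Your worry about a multi-coordinate joint density is unnecessary, since in the paper's setup $\mathcal{A}_k(z_j)=h(\hat{x}_j)+C\eta_k$ is a scalar output and the neighboring databases differ in which point $\hat{x}_j$ is queried rather than in the values of $h$ itself.
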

\begin{proof}
According to the definition of $\mathcal{A}_k$, 
it is $\varepsilon_k$-differentially private if 
\begin{equation*}
\operatorname{Pr}(\tilde{a}\le h(\hat{x}_j)+C\eta_k \le \tilde{b})
 \le e^{\varepsilon_k} \cdot 
\operatorname{Pr}(\tilde{a}\le h(\hat{x}_{j+1})+C\eta_k \le \tilde{b}),
\end{equation*} 
for arbitrary $\tilde{a}\le \tilde{b}$. Therefore, what we need to prove is that the inequality
\begin{equation}\label{lemma1-1}
\operatorname{Pr}(a+\frac{\text{GS}_{h,k}}{C} \le\eta_k\le b+\frac{\text{GS}_{h,k}}{C}) \le e^{\frac{\text{GS}_{h,k}}{b_k C}} \cdot \operatorname{Pr}(a \le\eta_k\le b)
\end{equation}
holds, and for $M^\prime<\frac{\text{GS}_{h,k}}{C}$, the inequality
\begin{equation}\label{lemma1-2}
\operatorname{Pr}(a+M^\prime \le\eta_k\le b+M^\prime) \le e^{\frac{\text{GS}_{h,k}}{b_k C}} \cdot \operatorname{Pr}(a \le\eta_k\le b)
\end{equation}
also holds, where $a\le b$. By the virtue of Lemma \ref{lemma4.6}, we know that inequality (\ref{lemma1-1}) holds. According to the properties of Laplacian distribution, inequality (\ref{lemma1-2}) also holds when $M^\prime<\frac{\text{GS}_{h,k}}{C}$. The proof is completed.
\end{proof}

We have proved that the encryption $h_k(x)=h(x)+C\eta_k, \ \eta_k \sim \operatorname{Lap}(b_k)$, $C > 0$, is a kind of differentially private mechanism, and we call this the additive mechanism.

\subsection{The differentially private mechanism based on mixed noise}

We have discussed how to add random additive noise to encrypt $h$. We now try to provide another choice by adding random multiplicative noise in the following. 
We now give the definition of the operator $\mathcal{B}_k$. 
\begin{definition}
At the $k$-th iteration, we define the operator $\mathcal{B}_k$ as
$
\mathcal{B}_k (z_j)=h(\hat{x}_j)+\gamma_k \cdot (f(\hat{x}_j)+h(\hat{x}_j)),
$ 
where $\gamma_k \sim \operatorname{U}(-u_k, u_k)$, $0<u_k <1$, $j=1,\ldots, k$,  where $\operatorname{U}$ denotes the uniform distribution, and the corresponding probability density function is
\begin{equation*}
p(x)=\left\{
\begin{aligned}
&\frac{1}{2u_k},\ \ \text{if $x\in [-u_k, u_k]$,}\\
&\ \ \ 0,\ \ \ \ \text{otherwise}.
\end{aligned}
\right.
\end{equation*}
 We assume that $\vert f(\hat{x}_j)+h(\hat{x}_j)\vert>0$, for each $j\in\mathbb{N}^{+}$, since $f(\hat{x}_j)+h(\hat{x}_j)$ has no sense of the encryption otherwise.
\end{definition}
\begin{remark}
We focus on the common definitional domain of the discussed uniform distributions.
\end{remark}

Then we can prove that the operator $\mathcal{B}_k$ is differentially private.
\begin{theorem}\label{multi1}
At the $k$-th iteration, if $\gamma_k$ only depends on the iteration $k$, then the operator $\mathcal{B}_k$ is $\varepsilon_k'$-differentially private, where 
$
\varepsilon_k'=\max_{k-m+1 \le j \le k-1} \vert \ln \frac{\vert f(\hat{x}_{j+1})+h(\hat{x}_{j+1})\vert}{\vert f(\hat{x}_{j})+h(\hat{x}_{j})\vert} \vert$. 
\end{theorem}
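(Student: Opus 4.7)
The plan is to follow the same density-ratio strategy used in the proof of Theorem \ref{4.30-4}, but now applied to the multiplicative/mixed structure of $\mathcal{B}_k$. First I would observe that at a fixed interpolation index $j$, the output $Y_j:=\mathcal{B}_k(z_j)=h(\hat{x}_j)+\gamma_k\bigl(f(\hat{x}_j)+h(\hat{x}_j)\bigr)$ is an affine image of $\gamma_k\sim\operatorname{U}(-u_k,u_k)$. Since $\vert f(\hat{x}_j)+h(\hat{x}_j)\vert>0$ by the standing assumption, $Y_j$ is uniformly distributed on the interval $I_j$ centered at $h(\hat{x}_j)$ with radius $u_k\vert f(\hat{x}_j)+h(\hat{x}_j)\vert$, whose density is
\begin{equation*}
p_{Y_j}(s)=\frac{1}{2u_k\vert f(\hat{x}_j)+h(\hat{x}_j)\vert}\mathbf{1}_{I_j}(s).
\end{equation*}

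Next, as in the additive case, it suffices to bound the pointwise density ratio on the overlap of the two supports, since the neighboring databases $z_{k-1}$ and $z_k$ (or, inductively, $z_j$ and $z_{j+1}$, cf.\ Definition \ref{adjacent dataset}) differ in exactly the data point attached to $\hat{x}_{j+1}$. On $I_j\cap I_{j+1}$ the ratio simplifies to
\begin{equation*}
\frac{p_{Y_j}(s)}{p_{Y_{j+1}}(s)}=\frac{\vert f(\hat{x}_{j+1})+h(\hat{x}_{j+1})\vert}{\vert f(\hat{x}_j)+h(\hat{x}_j)\vert},
\end{equation*}
so that $\bigl\vert\ln(p_{Y_j}(s)/p_{Y_{j+1}}(s))\bigr\vert$ is bounded by $\varepsilon_k'$ by the very definition of $\varepsilon_k'$. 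Integrating over any measurable $\mathcal{S}$ then yields $\operatorname{Pr}(\mathcal{B}_k(z_j)\in\mathcal{S})\le e^{\varepsilon_k'}\operatorname{Pr}(\mathcal{B}_k(z_{j+1})\in\mathcal{S})$, which is the required differential-privacy inequality in Definition \ref{differential privacy}. Taking the maximum over $k-m+1\le j\le k-1$ accounts for the fact that $\mathcal{B}_k$ is queried on the whole current interpolation set, in exact analogy with (\ref{4.1}) used in Theorem \ref{4.30-4}.

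The main obstacle is that $Y_j$ and $Y_{j+1}$ have \emph{bounded} supports $I_j$ and $I_{j+1}$, in contrast with the Laplacian case where the support is all of $\mathbb{R}$. Thus the pointwise density comparison is only immediate on $I_j\cap I_{j+1}$, and one needs to argue that the piece of $\mathcal{S}\cap I_j$ outside $I_{j+1}$ does not spoil the inequality. I would handle this by exploiting that the centers $h(\hat{x}_j),h(\hat{x}_{j+1})$ and half-widths $u_k\vert f(\hat{x}_j)+h(\hat{x}_j)\vert,u_k\vert f(\hat{x}_{j+1})+h(\hat{x}_{j+1})\vert$ of $I_j,I_{j+1}$ scale with the same quantities $f(\hat{x})+h(\hat{x})$ appearing in $\varepsilon_k'$, so that the ``boundary contribution'' can be absorbed into the ratio of measures of $I_j$ and $I_{j+1}$; this is the analogue of the case analysis performed in Lemma \ref{lemma4.6} for the Laplacian density, and is the only step that genuinely requires care beyond a direct density-ratio calculation.
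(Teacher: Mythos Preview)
Your density-ratio computation is exactly the paper's argument: the paper writes
\[
\frac{p\bigl(h(\hat{x}_j)+\gamma_k(f(\hat{x}_j)+h(\hat{x}_j))\bigr)}{p\bigl(h(\hat{x}_{j+1})+\gamma_k(f(\hat{x}_{j+1})+h(\hat{x}_{j+1}))\bigr)}
=\frac{\vert f(\hat{x}_{j+1})+h(\hat{x}_{j+1})\vert}{\vert f(\hat{x}_{j})+h(\hat{x}_{j})\vert}
\le e^{\varepsilon_k'}
\]
and concludes immediately. So on the main line you and the paper coincide.

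Where you diverge is in your last paragraph. You correctly flag the bounded-support obstacle, but the paper does \emph{not} attempt the absorption argument you sketch; instead it disposes of the issue up front via the Remark following the definition of $\mathcal{B}_k$ (``We focus on the common definitional domain of the discussed uniform distributions''), i.e.\ the inequality is asserted only on $I_j\cap I_{j+1}$. Your proposed fix---absorbing the mass of $\mathcal{S}\cap(I_j\setminus I_{j+1})$ into the ratio of $|I_j|$ and $|I_{j+1}|$---cannot work in general: if $\mathcal{S}\subset I_j\setminus I_{j+1}$ has positive Lebesgue measure then $\operatorname{Pr}(Y_j\in\mathcal{S})>0$ while $\operatorname{Pr}(Y_{j+1}\in\mathcal{S})=0$, and no finite $\varepsilon$ saves the inequality. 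So the ``genuine care'' you anticipate is not a repairable technicality but a structural feature of uniform noise; the paper's resolution is simply to restrict the output domain rather than to argue around it.
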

\begin{proof}
At the $k$-th iteration with $m$ interpolation points $\hat{x}_{k-m+1},\ldots,\hat{x}_k$, it holds that   
\begin{equation*}
\frac{p(h(\hat{x}_j)+\gamma_k \cdot (f(\hat{x}_j)+h(\hat{x}_j)))}{p(h(\hat{x}_{j+1})+\gamma_k \cdot (f(\hat{x}_{j+1})+h(\hat{x}_{j+1})))}=\frac{\vert f(\hat{x}_{j+1})+h(\hat{x}_{j+1})\vert}{\vert f(\hat{x}_{j})+h(\hat{x}_{j})\vert},
\end{equation*}
where $p(\cdot)$ denotes the probability density function of $\gamma_k$. 
Thus
\begin{equation}\label{601}
\begin{aligned}
\frac{p(h(\hat{x}_j)+\gamma_k \cdot(f(\hat{x}_j)+h(\hat{x}_j)))}{p(h(\hat{x}_{j+1})+\gamma_k \cdot(f(\hat{x}_{j+1})+h(\hat{x}_{j+1})))} \le
e^{\underset{k-m+1 \le j \le k-1}{\max} \left\vert \ln \frac{\left\vert f(\hat{x}_{j+1})+h(\hat{x}_{j+1})\right\vert}{\left\vert f(\hat{x}_{j})+h(\hat{x}_{j})\right\vert} \right\vert}
=e^{\varepsilon'_k}.
\end{aligned}
\end{equation}
Therefore,  we obtain that
\begin{equation*}
{p(h(\hat{x}_j)+\gamma_k \cdot(f(\hat{x}_j)+h(\hat{x}_j)))} \le e^{\varepsilon'_k}\cdot {p(h(\hat{x}_{j+1})+\gamma_k \cdot(f(\hat{x}_{j+1})+h(\hat{x}_{j+1})))}.
\end{equation*}
The proof is completed.
\end{proof}

We finally remark that the mix of the operators $\mathcal{A}_k$ and $\mathcal{B}_k$ is still differentially private, which demonstrates that the mixed mechanism proposed by us can be used to encrypt $h$ in practice. 
The following theorem follows from Theorem \ref{4.30-4}, Theorem \ref{multi1} and Claim 2.2 in the paper of Kasiviswanathan et al. \cite{Kasiviswanathan2008}.
\begin{theorem}\label{theorem4.18}
Let the mixed operator $\mathcal{M}_k$ be obtained by mixing operators $\mathcal{A}_k$ and $\mathcal{B}_k$, i.e.,  $\mathcal{M}_k=p(\mathcal{A}_k, \mathcal{B}_k)=\mathcal{A}_k$ or $\mathcal{B}_k$ randomly, $k \in \mathbb{N}^+$. Then the operator $\mathcal{M}_k$ is $\bar{\varepsilon}_k$-differentially private, where
\begin{equation}\label{4.13}
\begin{aligned}
\bar{\varepsilon}_k=
\frac{\text{GS}_{h,k}}{b_kC}+
\max_{k-m+1 \le j \le k-1} \left\vert \ln \frac{\vert f(
\hat{x}_{j+1})+h(\hat{x}_{j+1})\vert}{\vert f(\hat{x}_{j})+h(\hat{x}_{j})\vert} \right\vert. 
\end{aligned}
\end{equation}
\end{theorem}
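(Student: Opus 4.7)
The plan is to reduce the problem to the two privacy guarantees already established for the constituent mechanisms $\mathcal{A}_k$ and $\mathcal{B}_k$, and then invoke the composition-type bound from Kasiviswanathan et al.\ \cite{Kasiviswanathan2008} to combine them into a single privacy budget for the mixture $\mathcal{M}_k$.

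First I would fix two neighboring databases $z,z'$ in the sense of Definition \ref{adjacent dataset} and an arbitrary measurable output set $\mathcal{S}$. Applying Theorem \ref{4.30-4} to the additive component yields $\Pr(\mathcal{A}_k(z)\in\mathcal{S})\le e^{\varepsilon_k}\,\Pr(\mathcal{A}_k(z')\in\mathcal{S})$ with $\varepsilon_k=\text{GS}_{h,k}/(b_kC)$. Applying Theorem \ref{multi1} to the multiplicative component yields the analogous inequality for $\mathcal{B}_k$ with $\varepsilon'_k=\max_{k-m+1\le j\le k-1}|\ln(|f(\hat{x}_{j+1})+h(\hat{x}_{j+1})|/|f(\hat{x}_j)+h(\hat{x}_j)|)|$. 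At this point each of the two mechanisms is known to satisfy the defining inequality of Definition \ref{differential privacy} with its own privacy budget.

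Next, because $\mathcal{M}_k=p(\mathcal{A}_k,\mathcal{B}_k)$ selects one of the two mechanisms according to auxiliary randomness independent of the data, I would expand $\Pr(\mathcal{M}_k(z)\in\mathcal{S})$ by the law of total probability into a convex combination of $\Pr(\mathcal{A}_k(z)\in\mathcal{S})$ and $\Pr(\mathcal{B}_k(z)\in\mathcal{S})$. Substituting the two inequalities above and then invoking Claim 2.2 of \cite{Kasiviswanathan2008}, which is the composition-style bound for combining differentially private mechanisms, I would convert the pair of individual $\varepsilon$-DP guarantees into the single $(\varepsilon_k+\varepsilon'_k)$-DP guarantee asserted by the theorem. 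Collecting the explicit expressions for $\varepsilon_k$ and $\varepsilon'_k$ recovers formula (\ref{4.13}).

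The main obstacle I anticipate is the bookkeeping that certifies the selection step $p(\mathcal{A}_k,\mathcal{B}_k)$ is independent of the database content, so that the random choice itself leaks no distinguishing information between $z$ and $z'$; without this independence the convex-combination argument would not close and Kasiviswanathan's bound would not be directly applicable. Once this independence assumption is formalized (it is implicit in the statement of the theorem, since $\mathcal{A}_k$ and $\mathcal{B}_k$ are defined using noise $\eta_k,\gamma_k$ that only depend on $k$), the rest of the argument is a straightforward chain of inequalities.
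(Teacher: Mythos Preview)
Your proposal is correct and follows essentially the same route as the paper: the paper's own proof is a one-line appeal to Theorem~\ref{4.30-4} and Theorem~\ref{multi1} combined via Claim~2.2 of Kasiviswanathan et al.~\cite{Kasiviswanathan2008}. Your additional unpacking (law of total probability for the random selection, the independence bookkeeping) supplies more detail than the paper records, but the underlying argument is identical.
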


\begin{proof}
This is a consequence of Theorem \ref{4.30-4} and Theorem \ref{multi1}, according to Claim 2.2 about the mixed operator in the paper of Kasiviswanathan et al. \cite{Kasiviswanathan2008}.
\end{proof}

This mechanism is called the mixed mechanism. Theorem \ref{theorem4.18} indicates that the data provider can carefully select the operator $\mathcal{M}_k$, the parameters $b_k$ and $C$ to keep the privacy budget $\bar{\varepsilon}_k$ small. In detail, according to (\ref{4.13}), the data provider can define the constant $C$ large enough to make
$ 
\vert\frac{\text{GS}_{h,k}}{b_kC}\vert
$
as small as possible.  
The data provider is eligible to instruct the random operator $\mathcal{M}_k$ not to choose the operator $\mathcal{B}_k$, or to choose $\mathcal{B}_k$ under lower probability when the value of the privacy budget $\bar{\varepsilon}_k$ is larger than an expected constant, and thus the value of the last term of (\ref{4.13}) can be controlled. As a result, the scale of the privacy budget can be small, achieving a pretty safe encryption.

In addition, notice that for the mixed mechanism, the noise at each point is $C \eta_k$ or $\gamma_k \cdot (f(x)+h(x))$, where $\eta_k\sim\operatorname{Lap}(b_k), C>0$, and $\gamma_k\sim \operatorname{U}(-u_k,u_k)$, where $0<u_k<1$. The specific form is hard to obtain by subtracting or dividing directly at an iteration. This characteristic makes the noise of every interpolation point different from each other.   

\subsection{Convergence analysis for solving differentially private problems}

This subsection analyzes the global convergence of DFOp when it is applied for solving differentially private problems. The following theorem shows that the differentially private problems belong to the problems that the objective functions are with positive monotonic transformations. It should be noted that the convergence analysis in this work is carried out under stronger theoretical assumptions and does not directly apply to the practical NEWUOA framework.

\begin{theorem}\label{EM_LT}
The function values provided by the additive mechanism and the mixed mechanism are both transformed/encrypted from the original function values by positive monotonic transformations.
\end{theorem}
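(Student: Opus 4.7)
The plan is to show that, for each iteration $k$ and each of the two mechanisms, the encrypted objective $F_k = f + h_k$ can be written as $F_k(x) = T_k(F(x))$ for a strictly increasing function $T_k:\mathbb{R}\to\mathbb{R}$, so that Definition \ref{PMT} applies directly. The key observation that unlocks everything is that $\eta_k$ and $\gamma_k$ are sampled \emph{once per iteration}: although random, the same realization is used across every interpolation index $j = 1,\ldots,k$, so within iteration $k$ they behave as deterministic scalars, and the induced transformation is a bona fide function of a single real variable.

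First I would handle the additive mechanism $\mathcal{A}_k$. A direct computation gives $F_k(x) = f(x) + h(x) + C\eta_k = F(x) + C\eta_k$, so setting $T_k(t) = t + C\eta_k$ exhibits $F_k$ as a translation of $F$, which is strictly increasing on $\mathbb{R}$ and therefore positive monotonic by Definition \ref{PMT}. This also dovetails cleanly with Corollary \ref{xpc-2} and Theorem \ref{corollary5.4-xpc-2-test}, which treat exactly this translation case.

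Next I would handle the multiplicative component $\mathcal{B}_k$ of the mixed mechanism. Substituting its definition gives $F_k(x) = f(x) + h(x) + \gamma_k(f(x) + h(x)) = (1+\gamma_k)F(x)$, so the natural candidate is $T_k(t) = (1+\gamma_k)t$. Strict monotonicity then reduces to verifying $1+\gamma_k > 0$, which follows from $\gamma_k \in (-u_k,u_k)$ and the standing hypothesis $0 < u_k < 1$, yielding $1+\gamma_k \in (1-u_k,1+u_k) \subset (0,2)$. This is a positive scaling, hence positive monotonic. Since $\mathcal{M}_k = p(\mathcal{A}_k,\mathcal{B}_k)$ is realized at each iteration as either $\mathcal{A}_k$ or $\mathcal{B}_k$, the induced transformation is always one of these two types, so the mixed mechanism also encrypts via positive monotonic transformations.

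The main obstacle is essentially conceptual rather than computational: one must be careful to treat the randomness correctly. The transformation $T_k$ of Definition \ref{PMT} is a deterministic map $\mathbb{R}\to\mathbb{R}$, so I need to emphasize that for each realized sample path the noise $\eta_k$ (resp.\ $\gamma_k$) is a fixed constant across all interpolation points within iteration $k$, which is precisely what the definitions of $\mathcal{A}_k$ and $\mathcal{B}_k$ prescribe. Aside from this, the argument is a short sequence of algebraic rewrites with no further estimates needed.
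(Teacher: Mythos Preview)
Your proposal is correct and follows essentially the same route as the paper: both arguments reduce to checking that, once the iteration-$k$ noise is realized, the induced map on function values is strictly increasing. You carry this out by explicitly exhibiting $T_k(t)=t+C\eta_k$ for $\mathcal{A}_k$ and $T_k(t)=(1+\gamma_k)t$ for $\mathcal{B}_k$ and verifying $1+\gamma_k>0$ from $0<u_k<1$; the paper instead verifies order preservation directly on pairs of sample values, but the content is the same.
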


\begin{proof}
We only need to prove that the function values provided by $\mathcal{A}_k$ and $\mathcal{B}_k$ are transformed/encrypted by positive monotonic transformations. 
It holds for each $j\in \mathbb{N}^{+}$ that 
\begin{equation*}
\begin{aligned}
&\mathcal{A}_k(z_j)=h(\hat{x}_j)+C\eta_k,\\
 &\mathcal{B}_k(z_j)=h(\hat{x}_j)+\gamma_k(f(\hat{x}_j)+h(\hat{x}_j)), 
\end{aligned}
\end{equation*} 
where $\eta_k \sim \operatorname{Lap}(b_k)$, $b_k > 0$, $C > 0$ and $\gamma_k \sim \operatorname{U}(-u_k, u_k)$, $0<u_k <1$.

Therefore, for $h(\hat{x}_p)<h(\hat{x}_q)$, it holds that
\begin{equation*}
\begin{aligned}
h(\hat{x}_p)+C\eta_k&<h(\hat{x}_q)+C\eta_k,\\
h(\hat{x}_p)+\gamma_k(f(\hat{x}_p)+h(\hat{x}_p))&<h(\hat{x}_q)+\gamma_k(f(\hat{x}_q)+h(\hat{x}_q)). 
\end{aligned}
\end{equation*} 

Consequently Theorem \ref{EM_LT} is true. 
\end{proof}

Next, we obtain the following lemma naturally.
\begin{lemma}\label{corollary5.5}
The additive mechanism and the mixed mechanism do not make the solution shift change.
\end{lemma}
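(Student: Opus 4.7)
The plan is to obtain this lemma as a direct corollary of the two immediately preceding results, namely Theorem \ref{monotone} (positive monotonic transformations preserve the solution shift) and Theorem \ref{EM_LT} (the additive and mixed mechanisms act on $F=f+h$ as positive monotonic transformations). No new calculation should be needed.

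First I would recall the definition of $\mathcal{D}_s(F_{k-1},F_k)$, which depends only on the ordered set $\mathcal{LM}_k$ of local minimum points of $F_k$. Since local minimum points are preserved by any strictly increasing function applied pointwise, the solution shift is invariant under positive monotonic transformations, and this is exactly the content of Theorem \ref{monotone}.

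Next I would invoke Theorem \ref{EM_LT} to assert that, for each realization of the random noise, the mapping $F\mapsto F_k$ induced by either $\mathcal{A}_k$ or $\mathcal{B}_k$ (and therefore by the mixed operator $\mathcal{M}_k$) is a positive monotonic transformation of the underlying objective $F=f+h$. Here I would be a bit careful to note that the transformations $T_k$ are random but that the monotonicity property holds pathwise: for every fixed realization of $\eta_k$ (respectively $\gamma_k$), the map is strictly increasing in the original function value, which is precisely what Theorem \ref{EM_LT} establishes. Hence Theorem \ref{monotone} applies realization-by-realization, giving $\mathcal{D}_s(F_{k-1},F_k)=\mathcal{D}_s(F,F)$ in each case.

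The statement of Lemma \ref{corollary5.5} then follows immediately by combining the two observations above. The main (mild) obstacle is simply the bookkeeping around the randomness of $\eta_k$ and $\gamma_k$: one must make explicit that positive monotonicity is a pathwise property, so that Theorem \ref{monotone} can legitimately be quoted without taking expectations or invoking any additional measurability argument. Once this is said, the proof is essentially one sentence: \textbf{by Theorem \ref{EM_LT} each mechanism acts as a positive monotonic transformation, and by Theorem \ref{monotone} such transformations leave the solution shift unchanged.}
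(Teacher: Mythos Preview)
Your proposal is correct and matches the paper's own proof exactly: the paper simply states that the conclusion is derived from Theorem~\ref{monotone} and Theorem~\ref{EM_LT} directly. Your additional remark about the pathwise nature of the monotonicity is a harmless (and arguably helpful) clarification, but the core argument is identical.
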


\begin{assumption}\label{10.3-new}
Suppose that \(x^*\) and \(\Delta\) are given. Assume that \(F_k\) are continuously differentiable with Lipschitz continuous gradient in an open domain containing the set \(L_{\text{enl},k}(x^*) ,\)  in which 
$
L_{\text{enl},k}(x^*)
=\cup_{x \in L_{k}(x^*)} \mathbb{B}_{\Delta}(x)$, and $ L_{k}(x^*)=\{x \in \mathbb{R}^{n}: F_k(x) \leq F_k(x^*)\}$, $\forall k\in\mathbb{N}^+
$.
\end{assumption}

In addition, considering the minimization, we assume that each transformed function $F_k$ is bounded from below. 
\begin{assumption}\label{10.5-new}
Suppose that \(F_k\) are all bounded from below on \(L_{k}(x^*)\), that is, there exists a
constant \(\kappa_{*}\) such that, for all \(x \in L_{k}(x^*)\), \(F_k(x) \geq \kappa_{*}\), $\forall k\in \mathbb{N}^+$.
\end{assumption}

For simplicity, we let the Hessian of the model function, denoted as $\nabla^{2} Q_{k}$, be uniformly bounded. The following assumption shows the details.
\begin{assumption}\label{10.6-new}
There exists a constant \(\kappa_{\text{bhm}}>0\) such that, for all \(x_{k}\) generated by DFOp, 
$
\Vert \nabla^2 Q_{k}(x_k)\Vert_{2} \leq \kappa_{\text{bhm}}.
$
\end{assumption}

\begin{proof}
The conclusion is derived from Theorem \ref{monotone} and Theorem \ref{EM_LT} directly.
\end{proof}

\begin{theorem}\label{con-new}
Let Assumption \ref{10.3-new}, Assumption \ref{10.5-new} and Assumption \ref{10.6-new} hold. Then
$
\lim_{k \rightarrow\infty}$ $\nabla F_k(x_{k})=0
$
holds for DFOp, where $F_k(x)=T_k(F(x))$, and $T_k$ is a positive monotonic transformation for each $k\in\mathbb{N}^+$. 
\end{theorem}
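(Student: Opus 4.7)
The plan is to reduce the convergence of DFOp on the sequence of transformed objectives $\{F_k\}$ to the classical global convergence theory for derivative-free trust-region methods (in the spirit of Conn, Scheinberg and Vicente), by exploiting the invariance results already established in the paper. First I would invoke Theorem \ref{monotone}: since each $T_k$ is a positive monotonic transformation, we have $\mathcal{D}_s(F_{k-1},F_k)=\mathcal{D}_s(F,F)$ for every $k\in\mathbb{N}^+$. In particular, the local minimizers of $F_k$ coincide as ordered sets with those of $F$, so the step structure (trust-region step, model-improvement step, and $\rho$-update) executed by DFOp on $\{F_k\}$ is structurally the same as on a single fixed objective. This places us in the standard framework where only the quality of the models and the mechanics of the trust-region globalization need be checked.

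Next I would verify the ingredients of the standard proof under the three assumptions. Assumption \ref{10.3-new} delivers Lipschitz continuity of $\nabla F_k$ on $L_{\text{enl},k}(x^*)$; Assumption \ref{10.5-new} provides the lower bound $\kappa_*$; Assumption \ref{10.6-new} bounds $\Vert\nabla^2 Q_k\Vert_2$ by $\kappa_{\text{bhm}}$. Combined with the least Frobenius norm update of Section \ref{DFOp} and the geometry-preserving calls to TRSAPP, BIGLAG and BIGDEN, this yields that $Q_k$ is a fully linear model of $F_k$ on the trust region $\mathbb{B}_{\Delta_k}(x_{\text{opt}})$, so that there exist constants $\kappaeg,\kappaef>0$ independent of $k$ with
\begin{equation*}
\Vert \nabla F_k(x_k)-\nabla Q_k(x_k)\Vert_2 \le \kappaeg \Delta_k,\qquad
\vert F_k(x)-Q_k(x)\vert\le \kappaef\Delta_k^{2}
\end{equation*}
for all $x\in\mathbb{B}_{\Delta_k}(x_{\text{opt}})$. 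This is the key link between trust-region radius and gradient error in the transformed regime.

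I would then carry out the classical three-step argument. Step one: show $\liminf_{k\to\infty}\Vert\nabla F_k(x_k)\Vert_2=0$, by contradiction. If $\Vert\nabla F_k(x_k)\Vert_2\ge\varepsilon>0$ for all large $k$, then the RATIO test together with the Cauchy-point decrease bound $Q_k(x_{\text{opt}})-Q_k(x_{\text{opt}}+d_k)\ge \frac{1}{2}\Vert\nabla Q_k(x_k)\Vert_2\min\{\Delta_k,\Vert\nabla Q_k(x_k)\Vert_2/\kappa_{\text{bhm}}\}$ implies that a uniform positive decrease of $F_k$ is obtained at each successful iteration; since positive monotonic transformations preserve the ordering used in the ratio test, the per-iteration decrease telescopes as in the classical proof, and Assumption \ref{10.5-new} forces $\rho_k\to 0$ and $\Delta_k\to 0$, contradicting the fully linear error bound together with $\Vert\nabla F_k(x_k)\Vert_2\ge\varepsilon$. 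Step two: strengthen $\liminf$ to $\lim$ by the usual separation argument on indices where $\Vert\nabla F_k(x_k)\Vert_2$ would exceed a threshold, invoking the Lipschitz continuity of $\nabla F_k$ (with a uniform Lipschitz constant coming from Assumption \ref{10.3-new}) and the uniform model Hessian bound. Step three: conclude $\lim_{k\to\infty}\nabla F_k(x_k)=0$.

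The main obstacle, and the place where the argument departs from the textbook case, is step one: the classical telescoping of per-iteration decreases relies on a single fixed objective, whereas here each successful iteration decreases $F_k$, not $F_{k+1}$. The rescue is the positive monotonic transformation assumption combined with Theorem \ref{monotone} and Theorem \ref{EM_LT}: the ratio test and the notion of a successful step are invariant under $T_k$, and the iterates themselves are determined through the (invariant) solution shift, so one can equivalently track the decrease of the composed quantity $T_k^{-1}(F_k(x_k))=F(x_k)$ across iterations and apply Assumption \ref{10.5-new} to the single bounded-below function $F$. Once this reduction is accepted, the remainder of the proof is the standard trust-region derivative-free convergence argument, and $\lim_{k\to\infty}\nabla F_k(x_k)=0$ follows.
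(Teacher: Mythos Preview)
The paper's own proof is a single sentence: it says the argument is the same as Section~10.4 of Conn--Scheinberg--Vicente, with no further elaboration. Your plan is therefore in the same spirit, but you go considerably further by actually sketching the standard three-step trust-region argument and, more importantly, by isolating the one place where the transformed setting genuinely departs from the textbook one: the telescoping of per-iteration decreases when the objective $F_k$ changes with $k$. The paper does not comment on this difficulty at all.

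Your proposed rescue for that telescoping step is not quite right, however. First, the appeal to Theorem~\ref{EM_LT} is misplaced: that result is specific to the additive and mixed noise mechanisms, not to arbitrary positive monotonic transformations, and it plays no role here. Second, and more substantively, positive monotonicity of $T_k$ guarantees only that $F(x_{k+1})<F(x_k)$ on a successful step; it says nothing about the \emph{magnitude} of that decrease, so a uniform lower bound on $F_k(x_k)-F_k(x_{k+1})$ does not transfer to a uniform lower bound on $F(x_k)-F(x_{k+1})$ without some quantitative control on $T_k'$ (compare the extra hypothesis $\liminf_k dF_k/dF>\varepsilon$ in Corollary~\ref{corollary-con}). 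Likewise, the claim that the ratio test is ``invariant under $T_k$'' is true only in the trivial sense that DFOp computes \textsc{RATIO} from $F_k$ and its model $Q_k$ directly; it is not invariant as a numerical value under replacing $F_k$ by $F$. Since the paper's one-line citation omits exactly this point, you have not so much introduced a gap as surfaced one the paper leaves implicit; but your repair does not close it.
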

\begin{proof}
The proof process is as same as the convergence  analysis for trust-region methods based on derivative-free models in Section 10.4 in the book of Conn, Scheinberg and Vicente  \cite{conn2009introduction}. 
\end{proof}

\begin{corollary}\label{corollary-con}
Let Assumption \ref{10.3-new}, Assumption \ref{10.5-new} and Assumption \ref{10.6-new} hold, and we assume that there exists $\varepsilon>0$, such that ${\lim\inf}_{k\rightarrow\infty}\frac{dF_k}{dF}>\varepsilon$, then
$
\lim_{k \rightarrow \infty} \nabla F(x_{k})=0.
$
\end{corollary}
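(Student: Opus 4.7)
The plan is to reduce the corollary to Theorem~\ref{con-new} by factoring $\nabla F_k$ through $\nabla F$ via the chain rule, and then use the lower bound on $dF_k/dF$ to invert that relation. First I would apply Theorem~\ref{con-new} under the stated assumptions to obtain $\lim_{k\to\infty}\nabla F_k(x_k)=0$; this is the engine that does the work.

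Next I would interpret the transformation relation $F_k=T_k(F)$ pointwise and differentiate. Since $T_k:\mathbb{R}\to\mathbb{R}$ is a positive monotonic transformation (Theorem~\ref{EM_LT}) and, under Assumption~\ref{10.3-new}, $F_k$ is continuously differentiable, the chain rule yields
\begin{equation*}
\nabla F_k(x_k)=T_k'(F(x_k))\,\nabla F(x_k),
\end{equation*}
where $T_k'(F(x_k))$ is exactly the quantity denoted $dF_k/dF$ in the hypothesis. Monotonicity gives $T_k'\ge 0$ wherever it exists, and the hypothesis $\liminf_{k\to\infty} dF_k/dF>\varepsilon$ strengthens this to a strictly positive lower bound for all sufficiently large $k$: there exists $k_0$ such that $T_k'(F(x_k))\ge \varepsilon$ whenever $k\ge k_0$.

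Then I would solve for $\nabla F(x_k)$ and take norms. For $k\ge k_0$ we may divide to obtain
\begin{equation*}
\|\nabla F(x_k)\|_2=\frac{\|\nabla F_k(x_k)\|_2}{T_k'(F(x_k))}\le \frac{1}{\varepsilon}\,\|\nabla F_k(x_k)\|_2.
\end{equation*}
Since the right-hand side tends to $0$ by Theorem~\ref{con-new}, we conclude $\lim_{k\to\infty}\nabla F(x_k)=0$, which is the desired statement.

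The main obstacle is really only a matter of being careful with the hypothesis $\liminf_{k\to\infty} dF_k/dF>\varepsilon$: it must be read as a condition on the derivative $T_k'$ evaluated along the trajectory $\{x_k\}$ (or uniformly on the level set), so that the chain rule identity and the division step are both legitimate. Once this is granted, the argument is essentially a one-line rearrangement of Theorem~\ref{con-new}; the monotonicity of $T_k$ guarantees nonnegativity of $T_k'$ and the explicit lower bound $\varepsilon$ guarantees that the inversion is quantitative and uniform in $k$ for $k\ge k_0$.
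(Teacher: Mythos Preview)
Your proposal is correct and follows essentially the same approach as the paper: invoke Theorem~\ref{con-new} to get $\nabla F_k(x_k)\to 0$, apply the chain rule $\nabla F_k(x_k)=\frac{dF_k}{dF}\,\nabla F(x_k)$, and use the uniform lower bound $\varepsilon$ on $\frac{dF_k}{dF}$ to conclude $\nabla F(x_k)\to 0$. Your write-up is in fact more careful than the paper's, making the norm inequality and the role of $k_0$ explicit.
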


\begin{proof}
Theorem \ref{con-new} shows that
\begin{equation*}
\lim_{k \rightarrow \infty} \nabla F_{k}(x_{k})=0.
\end{equation*}
Besides, we know that
\begin{equation*}
\begin{aligned}
\nabla F_{k}(x_{k})=\frac{d F_{k}}{d F} \cdot \nabla F(x_{k}),
 \end{aligned}
 \end{equation*}
 \begin{equation*}
\begin{aligned}
 \underset{k\rightarrow\infty}{\lim\inf}\frac{dF_k}{dF}>\varepsilon,
 \end{aligned}
 \end{equation*}
  and thus 
\begin{equation*}
\lim_{k \rightarrow \infty} \nabla F(x_{k})=0.
\end{equation*} 
The proof is completed.
\end{proof}

Therefore, the following theorem holds, as a special case of Corollary \ref{corollary-con}.
\begin{theorem}\label{xxxpppccc-newthm}\color{black}
Let Assumption \ref{10.3-new}, Assumption \ref{10.5-new} and Assumption \ref{10.6-new} hold. Then it holds that 
 $
\lim_{k \rightarrow\infty} \nabla F(x_{k})=0
 $ 
 when using DFOp, where the transformed/encrypted objective functions satisfy that $F_k=f+h_k$, $\forall k\in \mathbb{N}^+$, where $h_k$ is encrypted by the additive mechanism or the mixed mechanism.
\end{theorem}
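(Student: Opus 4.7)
The plan is to reduce the theorem directly to Corollary \ref{corollary-con} by verifying its additional hypothesis, namely $\liminf_{k\to\infty} dF_k/dF > \varepsilon$, separately for the additive mechanism and for each branch of the mixed mechanism. Since Theorem \ref{con-new} (applied in light of Theorem \ref{EM_LT} and Lemma \ref{corollary5.5}) already gives $\lim_{k\to\infty}\nabla F_k(x_k) = 0$ under Assumptions \ref{10.3-new}, \ref{10.5-new}, \ref{10.6-new}, the only work remaining is to confirm that the scalar factor relating $\nabla F_k$ and $\nabla F$ is uniformly bounded away from zero.

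First I would handle the additive mechanism. Because $h_k(x) = h(x) + C\eta_k$ with $\eta_k \sim \operatorname{Lap}(b_k)$ depending only on $k$ and not on $x$, we have
\begin{equation*}
F_k(x) = f(x) + h(x) + C\eta_k = F(x) + C\eta_k,
\end{equation*}
so $T_k$ is a pure translation and $dF_k/dF \equiv 1$ path-wise. Any $\varepsilon \in (0,1)$ then verifies the liminf condition, and Corollary \ref{corollary-con} immediately delivers $\lim_{k\to\infty}\nabla F(x_k) = 0$ in this case. Next, for the mixed mechanism, each iteration selects either $\mathcal{A}_k$ (already covered) or $\mathcal{B}_k$, in which case
\begin{equation*}
F_k(x) = f(x) + h(x) + \gamma_k\bigl(f(x)+h(x)\bigr) = (1+\gamma_k)F(x),
\end{equation*}
so that $dF_k/dF = 1+\gamma_k$. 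Since $\gamma_k \sim \operatorname{U}(-u_k,u_k)$ with $0<u_k<1$, we obtain $1+\gamma_k \ge 1-u_k > 0$ almost surely. Combining both branches, $dF_k/dF \ge 1-u_k$ at every iteration, and provided $u_k$ is bounded away from $1$ (which is built into the definition of $\mathcal{B}_k$ and is in any case forced implicitly by keeping the privacy budget in Theorem \ref{theorem4.18} moderate), we can pick a uniform $\varepsilon \in (0, \liminf_k(1-u_k))$ and apply Corollary \ref{corollary-con} to conclude.

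The step I expect to require the most care is the multiplicative branch: if $\mathcal{B}_k$ is selected infinitely often and $u_k$ were allowed to approach $1$, the factor $1+\gamma_k$ could degenerate and the liminf hypothesis of Corollary \ref{corollary-con} would fail. The proof therefore leans on the standing assumption $u_k<1$ and, for a uniform positive $\varepsilon$, on $\limsup_k u_k < 1$; once this mild quantitative strengthening is acknowledged, the result follows by the chain rule identity $\nabla F_k = (dF_k/dF)\nabla F$ applied path-wise to each realization of $\eta_k,\gamma_k$, with no additional machinery beyond what has already been established.
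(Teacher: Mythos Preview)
Your proposal is correct and follows exactly the paper's route---reduction to Corollary \ref{corollary-con} via Theorem \ref{EM_LT} and Lemma \ref{corollary5.5}---only you actually supply the verification of the $\liminf_k dF_k/dF>\varepsilon$ hypothesis that the paper's one-line proof leaves implicit. Your caveat about needing $\limsup_k u_k<1$ for the multiplicative branch is a legitimate technical point that the paper itself does not address.
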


\begin{proof}
{\color{black}Corollary \ref{corollary-con},} Theorem \ref{EM_LT} and Lemma \ref{corollary5.5} give the conclusion of Theorem \ref{xxxpppccc-newthm}.
\end{proof}

According to the global convergence analysis of our DFOp, it can achieve good convergence property as an optimization solver for private black-box problems with additive or mixed noise-adding mechanisms.

\subsection{Numerical results}\label{Numerical}

This part compares the results in numerical experiments of DFOp with those of other derivative-free solvers when solving private black-box optimization problems encrypted by differentially private mechanisms. The results can numerically demonstrate the main features and advantages of DFOp.

We compare DFOp with the following algorithms: MATLAB optimization package Fminunc \cite{Fminunc-1,Fminunc-2}, MATLAB optimization package Fminsearch \cite{Fminsearch}, Nelder-Mead simplex method \cite{Kelley1999a,Higham2000} and NEWUOA. We use the performance profile \cite{audet2017derivative} for the comparison of different algorithms. 
The performance profile describes the number of iterations taken by the algorithm in the algorithm set $\mathcal{A}$ to achieve a given accuracy when solving problems in a given problem set. 
We define the value $F_{\mathrm{acc}}^{N}=(F(x_{N})-F(x_{0}))/(F(x_{\text{best}})-F(x_{0})) \in [0,1]$, and the tolerance $\tau \in [0,1]$, where $x_{N}$ denotes the best point found by the algorithm after $N$ function evaluations, $x_{0}$ denotes the initial point, and $x_{\text{best}}$ denotes the best known solution. When $F_{\mathrm{acc}}^{N} \ge 1-\tau$,  we say that the solution reaches the accuracy $\tau$. We give $N_{s,p}=\min\{n \in \mathbb{N},\ F_{\mathrm{acc}}^{n}\ge 1-\tau \}$ and the following definitions.
\begin{equation*}
\begin{aligned}
T_{s, p}=\left\{\begin{array}{ll}
1, & \text{if} \ F_{\mathrm{acc}}^{N} \geq 1-\tau \ \text{for some } N,\\
0, & \text{otherwise},
\end{array}\right.
\end{aligned}
\end{equation*}
and
\begin{equation*}
\begin{aligned}
r_{s, p}=\left\{\begin{array}{ll}
\frac{N_{s, p}}{\min \left\{N_{\tilde{s}, p}:\ \tilde{s} \in \mathcal{A}, T_{\tilde{s}, p}=1\right\}},& \text {if} \ T_{s, p}=1, \\
\ \ \ \ \ \ \ \ \ \ +\infty, & \text {if} \ T_{s, p}=0,
\end{array}\right.
\end{aligned}
\end{equation*}
where $s$ is the given solver or algorithm. For the given tolerance $\tau$ and a certain problem $p$ in the problem set $\mathcal{P}$, the parameter $r_{s, p}$ shows the ratio of the number of the function evaluations using the solver $s$ divided by that using the fastest algorithm on the problem $p$. In the performance profile, $\pi_{s}(\alpha)=\frac{1}{\vert\mathcal{P}\vert}\left\vert\left\{p \in \mathcal{P}: r_{s, p} \leq \alpha\right\}\right\vert$, where  $\alpha \in [1, +\infty)$, and $\vert\cdot\vert$ denotes the cardinality. Notice that a higher value of $\pi_s(\alpha)$ represents solving more problems successfully. 

We apply five algorithms to solve the private black-box optimization problems. All problems have been encrypted by the additive mechanism or the mixed mechanism. Besides, we also plot NEWUOA-N in the figure, and the problems solved by NEWUOA-N have no noise, i.e., they are not being encrypted. {\color{black}The notation -N here denotes no noise, and NEWUOA-N} can be regarded as the ground truth in some senses, which acts as a standard answer when the noise does not exist. Furthermore, the comparison between DFOp and NEWUOA-N can show whether DFOp is able to reduce the influence of the encryption noise. The private black-box function parts of the 94 test problems  are shown in Table \ref{table5}, and they are from classical and common unconstrained optimization test functions collections \cite{Powell04onupdating,Powell2003,Conn1994,Toint1978,Li2009,Luksan2010,Andrei2008,Li1988,CUTEr}.

\begin{table}[h!]  
	\centering   
		\caption{Test problems for private black-box function $h$\label{table5}} 
	   \fontsize{8}{8}\selectfont
	\begin{tabular}{lllllll}  
		\hline  
     arglina&  arglina4&  arglinb &  arglinc & argtrig& arwhead&  bdqrtic\\
		bdqrticp &bdalue& brownal & broydn3d &  broydn7d & brybnd&  chainwoo\\
     chebquad &chnrosnbz & chpowellb &chpowells &chrosen &cosinecube& curly10 \\
     curly20 &curly30 &dixmaane & dixmaanf &dixmaang &dixmaanh &dixmaani \\
     dixmaanj &dixmaank &dixmaanl &dixmaanm & dixmaann &dixmaano &dixmaanp \\
     dqrtic &edensch &eg2 &engval1 &errinros & expsum &extrosnb \\
      exttet &firose &fletcbv2 &fletcbv3 &fletchcr &fminsrf2 & freuroth\\
       genbrown &genhumps &genrose &indef &integreq &liarwhd &lilifun3 \\
		  lilifun4 &morebv &morebvl &ncb20 &ncb20b &noncvxu2 &noncvxun \\
       nondia & nondquar &penalty1 &penalty2 &penalty3 &penalty3p &powellsg \\
       power &rosenbrock  &sbrybnd &sbrybndl &schmvett &scosine &scosinel \\
       serose &sinquad &sparsine & sparsqur &sphrpts &spmsrtls &srosenbr \\
       stmod &tointgss &tointtrig &tquartic& trigsabs &trigssqs &trirose1 \\
       trirose2 &vardim &woods & - & -& - & -  \\ 
		\hline                                      
	\end{tabular}
\end{table}

Indeed, numerical results will not be influenced by the specific form of the public black-box function $f(x)$, and the primary influence is caused by the difference between $h_k(x)$ and $h(x)$, which is the noise for the  encryption. As we have mentioned in Remark \ref{callingtime}, the numbers of calling the function $f$ can represent the number of the whole function evaluations.
\begin{remark}
We implement DFOp, NEWUOA and NEWUOA-N in Fortran, while numerical experiments of other algorithms are implemented in MATLAB. However, the difference in program languages makes no influence or unfairness on comparing the number of function evaluations. 
\end{remark}
 
The increase of the scale of the encryption noise $C$ has very little influence on DFOp. However, for other algorithms, the performances have a significant decline when $C$ grows from 1 to 100. 
At the same time, most other algorithms perform worse for solving problems with the mixed mechanisms than solving those with the additive mechanisms. 
Nevertheless, the performance of DFOp is almost the same,  
which means that DFOp is more stable for solving problems with different mechanisms than the other algorithms. 
 
\begin{figure}[htbp]
  \centering

  \begin{subfigure}[t]{0.48\linewidth}
    \centering
    \includegraphics[width=\linewidth]{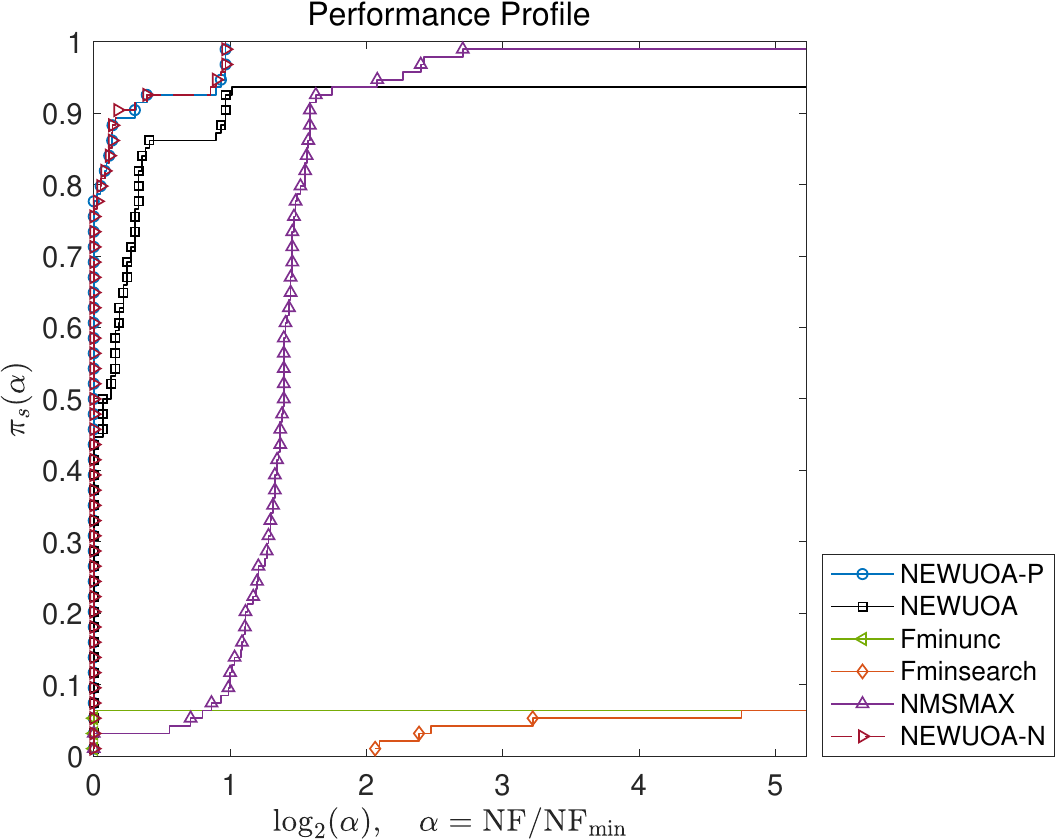}
    \caption{Mixed mechanisms, $\eta_k\sim\operatorname{Lap}\!\bigl(\tfrac{100}{k}\bigr)$,
    $\gamma_k\sim \mathcal{U}\!\bigl(-\tfrac{1}{k},\tfrac{1}{k}\bigr)$,
    $C=100$, $\tau=10^{-4}$.}
    \label{e}
  \end{subfigure}\hfill
  \begin{subfigure}[t]{0.48\linewidth}
    \centering
    \includegraphics[width=\linewidth]{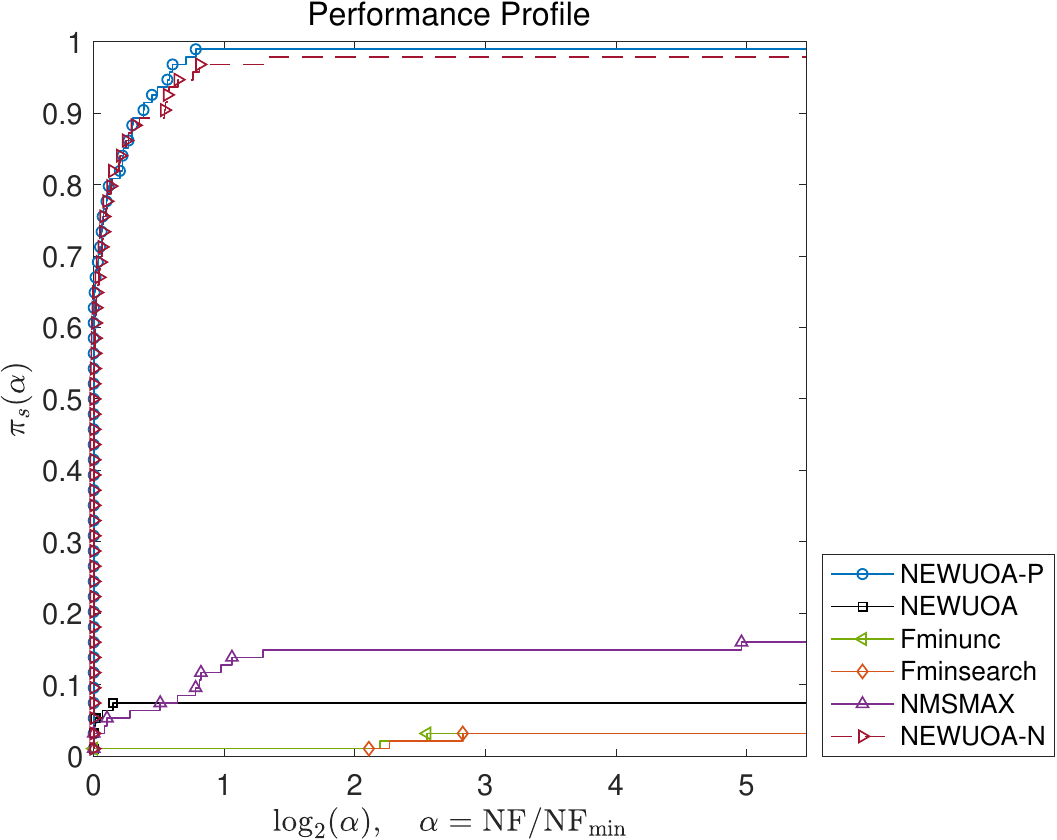}
    \caption{Mixed mechanisms, $\eta_k\sim\operatorname{Lap}\!\bigl(\tfrac{100}{k}\bigr)$,
    $\gamma_k\sim \mathcal{U}\!\bigl(-\tfrac{1}{k},\tfrac{1}{k}\bigr)$,
    $C=100$, $\tau=10^{-5}$.}
    \label{f}
  \end{subfigure}

  \caption{The comparison of different solvers solving the 94 test problems.}
  \label{fig:7}
\end{figure}

In Figs. \ref{e} and \ref{f}, we apply the mixed mechanisms, where $\eta_k\sim\operatorname{Lap}(\frac{100}{k})$, $\gamma_k\sim\operatorname{U}(-\frac{1}{k}, \frac{1}{k})$, $C=100$, and the tolerance $\tau$ are $10^{-1}$ and $10^{-5}$ respectively. It can be observed that when the tolerance $\tau=10^{-1}$, some solvers perform not so  badly, and the difference between the performances of different algorithms is not very large. However, when the tolerance $\tau=10^{-5}$, only DFOp  performs well for encrypted problems. In contrast, the other solvers have a significant decline in the performance profiles. The comparison between Figs. \ref{e} and \ref{f} reveals that even if the requirement of the accuracy is high, DFOp can solve most of the private black-box optimization problems. When the tolerance $\tau$ is small, DFOp performs better than the other algorithms. We can see that the interference of the encryption noise barely affects DFOp since the curves of DFOp and NEWUOA-N are close to each other.  
  
\begin{table}[htbp]
    \fontsize{6}{6}\selectfont
\caption{Parameters and results\label{table6}}
\begin{center}
\begin{tabular}{p{15.4mm}|p{3.2mm}p{13.5mm}p{1.5mm}|p{3.2mm}p{13.5mm}p{0.5mm}|p{3.2mm}p{13.5mm}p{1.5mm}}
\hline
&\multicolumn{3}{c}{$\eta_k \sim \operatorname{Lap}(\frac{1}{k})$}
&\multicolumn{3}{c}{$\eta_k \sim \operatorname{Lap}(\frac{100}{k})$}
&\multicolumn{3}{c}{$\eta_k \sim \operatorname{Lap}(\frac{10}{k})$}\\
\hline
Algorithm  & NF & $F_{\text{opt}}$ & & NF & $F_{\text{opt}}$ & & NF & $F_{\text{opt}}$ & \\ 
\hline
  DFOp  
  & 1033 & 1.25066$\times 10^{-13}$ &  Y
&1046 & 5.43843$\times 10^{-13}$ &  Y
&847 & 6.17840$\times 10^{-13}$ &  Y\\
  NEWUOA-N   
 & 990 & 0 & Y
  &   990 & 0 & Y
& 990 & 0 & Y\\
 NEWUOA  
& 613 & 0.13747 &  N
 & 348 & 7.2318 &  N
& 542 & 1.5818 &  N\\
Fminunc  
& 213 & 1.01$\times 10^5$ & N
& 198 & 1.01$\times 10^5$ & N
& 209 & 1.01$\times 10^5$ & N\\
  Fminsearch  
& 40010 & 0.96864 & N
  & 40003 & 6.1153 & N
& 40004 & 0.31475 & N\\
NMSMAX  
 & 643 & 0.00824 & N
&    612 & 0.39763 & N
& 631 & 0.15528 & N\\
\hline
\hline
&\multicolumn{3}{c}{$\gamma_k \sim \operatorname{U}(-\frac{1}{k},\frac{1}{k})$}
&\multicolumn{3}{c}{$\eta_k \sim \operatorname{Lap}(\frac{100}{k})$, $\gamma_k \sim \operatorname{U}(-\frac{1}{k},\frac{1}{k})$}
&\multicolumn{3}{c}{$\eta_k \sim \operatorname{Lap}(\frac{100}{k})$, $\gamma_k \sim \operatorname{U}(-\frac{k}{10^4},\frac{k}{10^4})$}\\
\hline
Algorithm  & NF & $F_{\text{opt}}$ & & NF & $F_{\text{opt}}$ & & NF & $F_{\text{opt}}$ & \\ 
\hline
  DFOp   
  & 1055 & 2.00320$\times 10^{-13}$ &  Y
  & 1056 & 1.66350$\times 10^{-13}$ &  Y
  & 948 & 5.37654$\times 10^{-13}$ &  Y \\
  NEWUOA-N   
  &  990 & 0 & Y
  & 990 & 0 & Y
  &  990 & 0 & Y\\
 NEWUOA  
 &   408 & 0.73448 &  N
& 432 & 6.533 &  N
 &  409 & 4.0762 &  N\\
Fminunc  
& 187 & 1.01$\times 10^5$ & N
&  143 & 1.01$\times 10^5$ & N
& 33 & 2.22$\times 10^{-16}$ & Y\\
  Fminsearch   
  &  40007 & 1.01$\times 10^5$ & N
& 40007 & 1.02$\times 10^5$  & N
  &  40004 & 6.3653 & N\\
NMSMAX  
& 564 & 0.04238 & N 
 & 585 & 0.06977 & N 
&  609 & 0.02855 & N \\
\hline
\end{tabular}
\end{center}
\end{table}

In the numerical experiments corresponding to Table \ref{table6}, the public black-box function $f$ is $\sum_{i=1}^{10} x_i^4$, and the private black-box function $h$ is $\sum_{i=1}^{10} x_i^2$. Notice that we denote $x=(x_1,\ldots,x_n)^{\top}$ in the part of numerical experiments. Besides, the initial point is $(10,\ldots,10)^{\top}$, and the constant $C=1$. The analytic solution of the numerical experiments is $(0,\ldots,0)^{\top}$, of which the corresponding minimum function value $F_{\text{opt}}$ is $0$. The notations Y and N in Table \ref{table6} denote whether the algorithm solves the problem successfully. Y means that the value of $F_{\text{opt}}$ is less than $10^{-3}$, and N denotes a failure to reach that accuracy. Here the notation NF denotes the number of the function evaluations until the iteration terminates.

From Table \ref{table6}, our numerical results indicate that except DFOp, all other algorithms can hardly solve these relatively basic and easy private problems. In other words, all the algorithms we tested, except DFOp and NEWUOA-N, have poor performances when solving private black-box optimization problems. In addition, the results of NEWUOA-N and DFOp are similar, which show that DFOp solves the optimization problems with encryption quite well, since NEWUOA-N plays a role as the ground truth.

\begin{figure}[h!]
\centering
\includegraphics[width=0.73\textwidth]{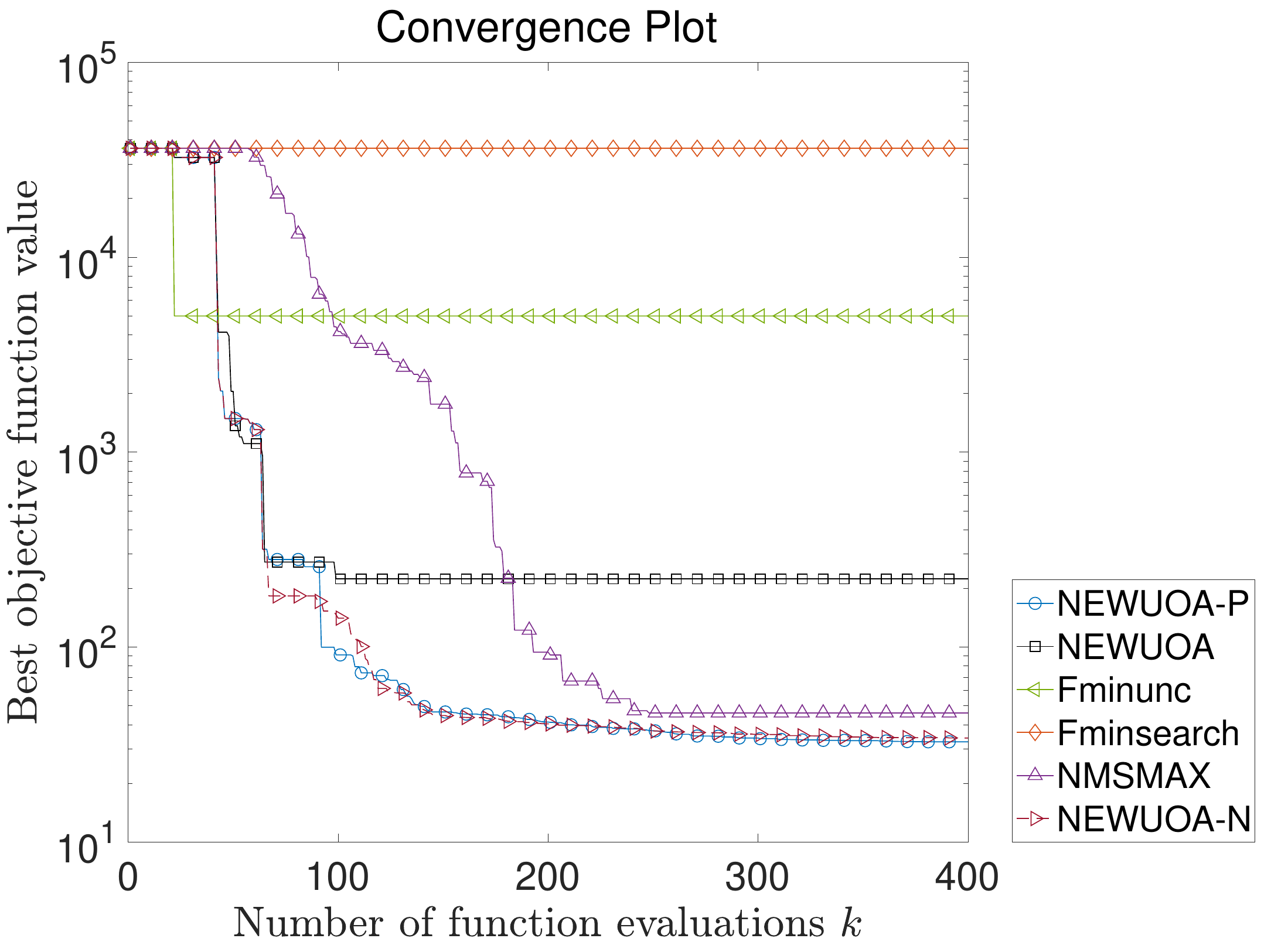}
\caption{Numerical results of algorithms for the 20-dim test problem encrypted arglina, mixed mechanisms: $\eta_k \sim \operatorname{Lap}(\frac{100}{k}), \gamma_k \sim \operatorname{U}(-\frac{1}{k}, \frac{1}{k}),$ $C=100$, the objective function: $f+h$  (note the vertical axis is on a log-scale)\label{fig8}}
\end{figure}

In Fig. \ref{fig8}, $h$ is the test function arglina in CUTEr \cite{CUTEr}, which is
$
h(x)=\sum_{i=1}^{20}(x_i-\bar{x}-1)^2+20(\bar{x}+1)^2
$, 
where $\bar{x}=\frac{1}{20}\sum_{i=1}^{20}x_i$, and $f(x)= 100(\sum_{i=1}^{20} x_i^2 -1)^2$. Besides, the initial point is $(1,\ldots,1)^{\top}$. It can be observed that the algorithms Fminunc and Fminsearch perform quite poorly.  
There is a large gap between the reached accuracy of other solvers and that of DFOp. We can also see that the curves of DFOp and NEWUOA-N decline faster than the other algorithms and reach good solutions. In this private optimization problem, DFOp behaves like the ground truth NEWUOA-N, which means that DFOp has overcome the influence brought by the encryption noise.

\begin{figure}[htbp]
\centering
\includegraphics[width=0.73\linewidth]{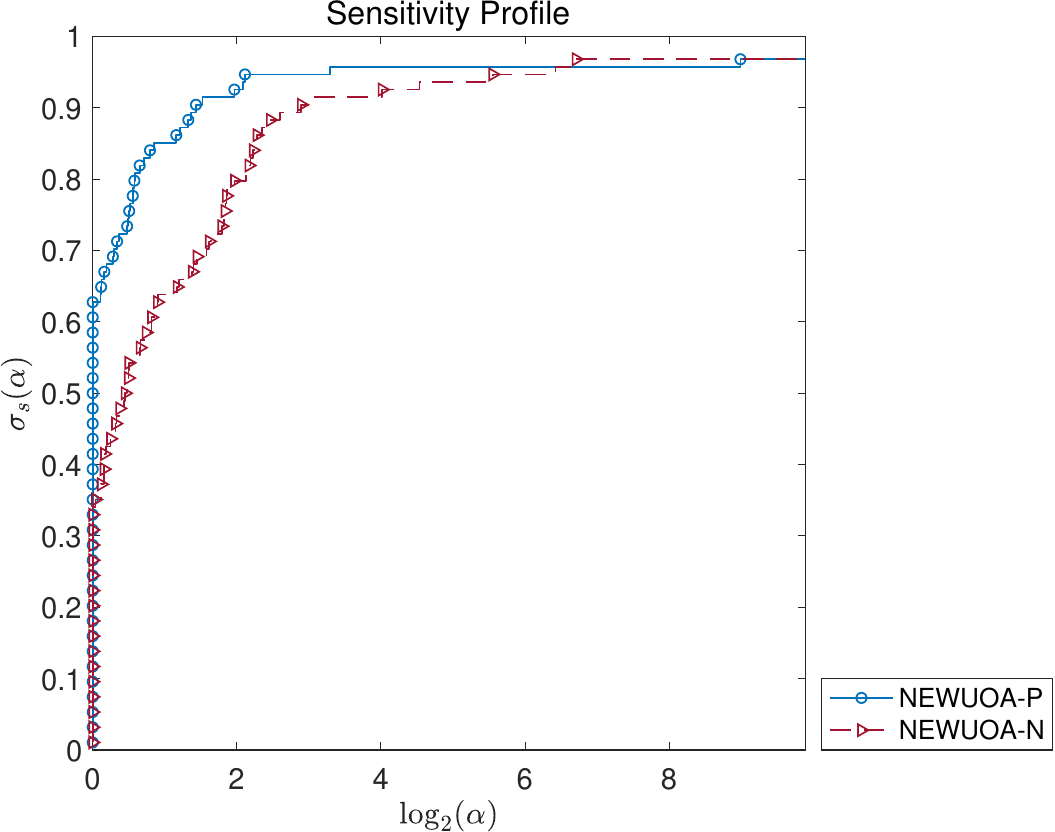}
\caption{Sensitivity profile: mixed mechanisms, $\eta_k \sim \operatorname{Lap}(\frac{100}{k})$, $\gamma_k \sim \operatorname{U}(-\frac{1}{k}, \frac{1}{k})$, $C=1$, $\tau=10^{-5}$\label{fig9}}
\end{figure}

In the numerical results reported in Fig. \ref{fig9}, $h$ denotes 94 test problems in Table \ref{table5} separately, whose dimension is 10, and $f(x)=100(\sum_{i=1}^{10} x_i^2 -1)^2$. In addition, the initial point is $(1,\ldots,1)^{\top}$, and $C=1$. Since a higher value of $\sigma_s(\alpha)$ refers to a more stable algorithm in sensitivity profile \cite{012}, Fig. \ref{fig9} depicts that the performance of DFOp is better than that of NEWUOA-N, which means that the round-off error of DFOp is smaller than that of our compared ground truth NEWUOA-N. The reason perhaps is that the private black-box optimization problems already have random noise terms. In fact, sensitivity profile is another important criterion which evaluates the stability of the algorithms. We denote random permutation matrices as $P_i\in\mathbb{R}^{n\times n}$, $i=1,2, \ldots, M$. In the experiments, $n=10$, $M=100$, and an example of the random permutation matrices is $P_1=\left(e_1, e_2,e_4,e_3,e_8,e_5,e_9,e_{10},e_6,e_7\right)^{\top}$. 
Besides, we define
$
\text{NF}=\left(\text{NF}_{1}, \cdots, \text{NF}_{M}\right),
$ 
where $\text{NF}_{i}$ denotes the number of function evaluations when solving the corresponding problem
$
\min_{x \in \mathbb{R}^{n}} F(P_ix)
$. 
We define 
$  
\operatorname{mean}(\text{NF})$ as $\frac{1}{M} \sum_{i}^{M} \text{NF}_{i}$, and the standard deviation $\operatorname{std}(\text{NF})$ as $\sqrt{\frac{1}{M} \sum_{i}^{M}(\text{NF}_{i}-\operatorname{mean}(\text{NF}))^{2}}$.  
In the performance profile, we apply $\operatorname{std}(\text{NF})$, corresponding to solving the problem $p$ with the solver $s$, instead of $N_{s,p}$ to get $\sigma_s(\alpha)$ instead of $\pi_s(\alpha)$, and then we get the sensitivity profile. 

Our numerical experiments show that DFOp can solve most private black-box optimization problems exactly. Besides, it needs fewer function evaluations and achieves higher accuracy than the other compared derivative-free solvers. DFOp performs best among the five solvers when solving the private problems with noises. The performances of DFOp and NEWUOA-N are similar  to each other, which shows that DFOp solves the encrypted optimization problems well, eliminating the interference from the noise. In addition, DFOp has been applied for the private simulation experiments on the traveling wave tube design, with good results.  

\section{Conclusions and the Future Work}

This paper analyzes the least Frobenius norm updating of the quadratic model functions in derivative-free optimization with transformed/encrypted objective functions, and makes improvements for solving black-box problems with transformed/encrypted objective functions and private black-box optimization problems. Due to the fact that original model updating formula can not be applied to the situation where the output objective function changes with the iteration, a new model updating formula has been proposed, and the improved derivative-free solver based on such model updating formula is named as DFOp. Convergence analysis of DFOp for first-order critical points is given.  
In addition, the discussion about model functions and transformations/encryptions is given, with the analysis of the solution shift.
Moreover, to encrypt the function value of $h(x)$ in the private black-box optimization, two optimality-preserving differentially private noise-adding mechanisms have been designed, and the properties that they are both differentially private with a controlled privacy budget are elucidated as well. At the end of the paper, we present our numerical results which indicate that DFOp is better than other compared derivative-free solvers for solving the problems above.

Furthermore, it is obvious and natural to find that the target private problems can include more than two parts. The objective function can be extended to multiple functions, and then the private black-box optimization problem has the generated form as 
\begin{equation*}
\min_{x \in \mathbb{R}^n} \ f(x)+\sum_{i=1}^{J} l_j(x),
\end{equation*}
where $f$ is the shared and public function, each $l_j$ is a private black-box function, and $J$ is the number of the cooperations, which can provide the function evaluations of their own functions $l_j, \ j=1,\ldots,J$. DFOp can be used to solve the private black-box optimization problems defined above. Therefore, the discussion and the solver in this paper can be extended to game theory in some senses. Derivative-free solvers for constrained black-box optimization problems with transformed/encrypted objective functions can be considered as well.
 
We believe that the approach and analysis we proposed can also be expanded to improve many other derivative-free optimization solvers.  
The combination of the derivative-free optimization and the private mechanisms waits for more researches in the future.
 
\bibliographystyle{abbrv}      
\bibliography{sn-bibliography}

\end{document}